\documentclass[11pt]{article}

\usepackage{style_file}

\newcommand{\titletext}{Sample correlation adjustments for robust Multi--fidelity Monte Carlo under limited pilot sampling}

\title{\titletext}

\author{
    Michael Stanley\footremember{ama}{Analytical Mechanics Associates, Hampton, VA 23666, USA.} \\ {\small \href{mailto:michael.c.stanley@ama-inc.com}{michael.c.stanley@ama-inc.com}~} \\
    Thomas Coons\footremember{mich}{Department of Mechanical Engineering, University of Michigan, Ann Arbor, MI, 48109} \\
    Geoffrey Bomarito\footremember{nasa}{NASA Langley Research Center, Hampton, VA, 23666, USA.} \\
    Patrick Leser\footrecall{nasa} \\
    Joshua Pribe\footrecall{ama} \\
    James Warner\footrecall{nasa}
}

\date{\today}

\begin{document}

\maketitle

\begin{abstract}
Multi-fidelity Monte Carlo (MFMC) is a variance reduction method that leverages a multi-fidelity ensemble of models of varying cost and accuracy levels. 
Constructing an MFMC estimator with optimal variance requires knowledge of the correlation coefficients between the different fidelity models which are not usually known in practice.
The correlations are typically estimated using offline pilot samples and the sample correlation formula, after which the MFMC method proceeds as if the estimated correlations are the true correlations.
Computational cost often restricts the number of pilot samples used leading to poor correlation estimates and suboptimal estimators.
Leveraging the MFMC problem setting and probabilistic information about the sample covariance matrix, we present a method to improve standard sample-based correlation estimates in the presence of limited pilot samples.
We define a novel discrepancy function quantifying the estimator suboptimality which in turn facilitates selecting a correlation estimator minimizing the worst-case expected discrepancy, where the expectation is taken with respect to the pilot sampling variability.
Through a simple bivariate Gaussian example and a multi-fidelity modeling application from a NASA Entry, Descent, and Landing (EDL) problem, we show that this method produces better MFMC estimators than the standard sample covariance under small pilot sample sizes and limited total budgets.
\end{abstract}

\section{Introduction} \label{sec:intro}

Multi-fidelity (also called multi-model) Monte Carlo methods have revolutionized the Monte Carlo estimation of statistical functionals for forward model uncertainty quantification (UQ) \citep{gorodetsky2020, peherstorfer2016,schaden2020,  warner2021, marten2023}. 
These methods use ideas from the control variates literature \citep{lavenberg1982} to accelerate the estimation of model statistics via variance reduction, optimally constructing an estimator from multiple correlated models with different computational costs. 
Constructing an estimator requires the computational cost of each model, the total computational budget, and the covariance of the model outputs (which, importantly, contains the inter--model correlations).
Although the first two are typically known ahead of time, the covariance matrix often must be estimated in a so-called pilot study.
The pilot study is an offline procedure where each model is run a finite number of times at the same model input values and a sample covariance matrix is computed.
The resulting sample covariance matrix is then treated as the ground truth, and an optimization problem is solved to find the optimal sample allocations and control variate weights.

Importantly, most of the multi-fidelity UQ literature relies on exact knowledge of the covariance matrix through large pilot sample sizes.
In practice, obtaining a large pilot sample size is intractable since the computationally intensive high-fidelity model must be run once per Monte Carlo sample.
As a result, many works \citep{osti_1885882, marten2023, warner2026} have proposed heuristics for pilot sampling termination and more recent works \citep{coons2025bayescov, dixon2026} have investigated methods for determining optimal pilot sample sizes that directly incorporate the covariance-matrix uncertainty and pilot sampling cost into the problem formulation.
Under limited budgets, the pilot sample sizes that one can practically afford are small, often just 5 to 20 pilot samples.
Since the sample covariance matrix is highly variable for these small sample sizes, the downstream estimator hyperparameters are generally suboptimal.

This paper provides a rigorous characterization of the suboptimality resulting from pilot sampling variability, as well as a strategy to address it.
In particular, we propose a framework for producing multi-fidelity estimators that are robust to pilot sampling variability, in turn broadening the practicality of these methods to cost-constrained settings.
To achieve this task, we turn our attention to the sample covariance formula.
While the sample covariance formula has desirable characteristics outside of the multi-fidelity setting, including unbiasedness, it is simply one possible choice for covariance estimation.
Our basic claim is that if one incorporates knowledge of the multi-fidelity estimation problem and the relevant sampling distributions, we can define an alternate covariance estimator that optimally navigates the bias-variance trade-off and can produce better downstream multi-fidelity estimators.

We consider the case where a user has some limited pilot sample data and wishes to produce a multi-fidelity estimator that mitigates the effects of the small pilot sample size.
To produce such an estimator, we first define a \emph{discrepancy function} that quantifies the estimator suboptimality as a result of inexact covariance information.
Next, we define a parameterized \emph{adjustment function} that maps from the sample covariance to a corrected covariance estimate.
Lastly, we formulate and solve an optimization problem to find the risk-minimizing adjustment parameters according to a carefully selected risk metric on the discrepancy function.
Specifically, we find a minimax adjustment on the expected discrepancy, with the expectation taken over the sampling distribution of the sample covariance matrix and the maximum taken over a $1 - \alpha$ confidence set on the true covariance matrix as calculated from the given data.
We call this process the ``data-driven minimax" (DDMM) adjustment, and, to the best of our knowledge, it is the first such method for rigorously handling pilot sampling variability for multi-fidelity forward UQ. 
We also introduce a data-free version of the minimax adjustment but focus our demonstrations on the DDMM procedure.

During our discussion, we intentionally introduce the DDMM adjustment generally ---  the framework can be used regardless of one's choice of adjustment or discrepancy function.
The DDMM method can easily be extended in formulation to settings with multiple low-fidelity models, adjustments that include standard deviation estimates, or other estimators with more complex sample allocation schemes, at the expense of computational cost, which we leave to future work.
We make specific choices in our own implementation that make the problem tractable.
We focus on the bi-fidelity setting and aim to correct the single correlation coefficient between the two available models.
A global sensitivity analysis using Shapley values \citep{owen2014,song2016,owen2017} finds that the correlation coefficient contributes significantly more to the variance of estimator variance than the standard deviations, justifying our choice to only adjust the former.
In addition, we limit our implementation to Multi-Fidelity Monte Carlo (MFMC) \citep{peherstorfer2016} estimators, which have analytical solutions for their variance-minimizing hyperparameter settings and which we show perform well under limited pilot sample sizes in comparison to more general sample allocation schemes.

The contributions of this work are as follows:
\begin{itemize}
	\item We introduce a novel log-ratio discrepancy function quantifying the suboptimality due to pilot sampling covariance variability by directly considering the unrealized reduction in estimator variance without access to exact covariance information. 
	\item We provide an empirical study into estimator robustness and the problem of pilot sampling. We perform a global sensitivity analysis on the MFMC estimator variance using Shapley values, highlighting that the correlation coefficient is the most important covariance component to estimate well. Together with an empirical investigation into the suboptimality of various multi-fidelity estimators beyond MFMC, this result justifies our practical focus on correlation adjustments for MFMC estimators.
	\item Drawing inspiration from Statistical Decision Theory (SDT), we define the DDMM adjustment procedure, which adopts a risk-aware approach to finding an optimal covariance estimate (under our assumptions), and thus estimator hyperparameters, given a set of pilot samples. We then develop a practical methodology for solving the DDMM adjustment optimization problem under the bi-fidelity MFMC setting.
	\item We develop a numerical procedure for determining the optimal DDMM confidence level $\alpha$ such that the adjusted estimator variance is expected to be as small as possible, as quantified by the expected discrepancy function.
	\item We demonstrate the overall method on two examples,
    showing that the method can produce better estimator variances than unadjusted estimators across a variety of estimator budgets and pilot sampling sizes in real multi-fidelity UQ analyses.
	
\end{itemize}

This paper is organized as follows.
In \Cref{sec:suboptimality}, we introduce the problem of pilot sampling by defining the discrepancy function and providing some intuition into how pilot sampling variability produces suboptimal estimators.
Next, in \Cref{sec:robustness}, we empirically explore the pilot sampling problem, providing a comparison of how different popular multi-fidelity estimators fare under limited pilot samples as well as a global sensitivity analysis to elucidate what covariance parameters most drive estimator suboptimality.
Then, in \Cref{sec:DDMM}, we formulate the DDMM procedure as a solution to the suboptimality problem, providing numerical procedures for solving the DDMM optimization and optimally setting the DDMM confidence level $\alpha$.
Lastly, in \Cref{sec:comp_and_emp_results}, we demonstrate the efficacy of the DDMM method on both a numerical benchmark problem and a practical application from the EDL problem from \cite{warner2021} before concluding with future work and final thoughts in \Cref{sec:conclusions}.

\section{Characterizing estimator variance suboptimality}\label{sec:suboptimality}
In this section, we introduce a benchmark problem to elucidate the problem of estimator suboptimality as a result of pilot sample variability.
\Cref{ss:suboptimality} introduces the benchmark problem along with the MFMC estimator and the intuition for suboptimality.
Then, \Cref{sec:discrepancy_function} defines the discrepancy function along with some of its mathematical properties.

\subsection{Demonstration of suboptimality} \label{ss:suboptimality}
The fundamental goal of multi-fidelity methods is to combine different model fidelity output to minimize estimator variance subject to finite computational constraints.
To introduce the sense in which using the sample covariance leads to estimator suboptimality, we use the following toy example.
Suppose we have two computational models, $\cM_i: \mathbb{R}^2 \to \mathbb{R}$ for $i = 0, 1$, where $i = 0$ refers to the high-fidelity (hifi) model and $i = 1$ refers to the low-fidelity (lofi) model.
Each of these model outputs is random due to a random input sampled $\bz \sim \cP$.
Model $\cM_i$ is associated with cost $c_i$ and we assume $c_0 > c_1$.
Given a total compute budget, $C$, we must satisfy the constraint that $n_0 c_0 + n_1 c_1 \leq C$, where $n_0$ is the number of hifi model evaluations and $n_1$ is the number of lofi model evaluations.
In this example, we assume $\cP = \mathcal{N}(\bm{0}, \bI_2)$ and define the following covariance matrix,
\begin{equation}
    \bSigma = \begin{pmatrix} 1 & \rho \\ \rho & 1 \end{pmatrix} = \bL \bL^T,
\end{equation}
with $\bL$ denoting its Cholesky factor.
When we draw a sample $\bz \sim \mathcal{N}(\bm{0}, \bI_2)$ and define $\by = \bL \bz$, we have $\by \sim \mathcal{N}(\bm{0}, \bSigma)$.
We suppose that the first component of $\by$ is the output of our hifi model and the second component the output from our lofi model.
We then have the following:
\begin{equation} \label{eq:toy_dgm}
    y_0 = \cM_0(\bz), \quad y_1 = \cM_1(\bz), \quad \text{Cov}[y_0, y_1] = \rho.
\end{equation}
To facilitate problem articulation, we further assume that we know each of the model output variances (without loss of generality these are taken to be $\text{Var}(\mathcal{M}_i(\bz)) = 1$), and thus we can simply explore the suboptimality of the pilot sample covariance through the sample correlation\footnote{Later, in the global sensitivity analysis study in \Cref{ss:gsa}, we will see that weakening this assumption may not lead to a dramatic shift in the estimator suboptimality.}.
We also assume that we know the model correlation is positive $\rho \in (0, 1)$.
Following the notation of \cite{peherstorfer2016}, the general MFMC estimator of the expectation is defined as,
\begin{equation} \label{eq:estimator}
    \hat{y}(\mfmcweight) := \hat{y}_0 + \mfmcweight \left( \hat{y}_{1+} - \hat{y}_{1-} \right),
\end{equation}
where $\hat{y}_0=n_0^{-1}\sum_i^{n_0}\cM_0(z^{(i)})$ is a hifi Monte Carlo (MC) estimator,  $\hat{y}_{1+}=n_0^{-1}\sum_i^{n_0}\cM_1(z^{(i)})$ is a lofi MC estimator using the same random input set, and $\hat{y}_{1-}=n_1^{-1}\sum_i^{n_1}\cM_1(z^{(i)})$ is a lofi MC estimator using an augmented input set with $(n_1 - n_0)$ additional MC samples.
The MFMC estimator has the following variance under the above model assumptions:
\begin{equation} \label{eq:est_variance}
    \estvar(\params; \rho) := \text{Var}\left[ \hat{y}(\mfmcweight) \right] = \frac{1}{n_0} + \left(\frac{1}{n_0} - \frac{1}{n_1} \right) (\mfmcweight^2 - 2 \mfmcweight \rho).
\end{equation}
We have condensed the left-hand side using $\params = \begin{pmatrix} n_0 & n_1 & \mfmcweight \end{pmatrix}^T$ to indicate the quantities over which the variance is optimized, i.e., the MFMC estimator hyperparameters.
The optimal model evaluation numbers and estimator weight are then found by minimizing the estimator variance subject to the known computational constraints,
\begin{align}
    \underset{\params}{\min} &\quad \estvar(\params; \rho) \label{opt:est_variance} \\
    \text{subject to} &\quad n_0 c_0 + n_1 c_1 \leq C. \nonumber
\end{align}
We refer to the solution to Optimization~\ref{opt:est_variance} by $\params^* = \begin{pmatrix} n_0^* & n_1^* & \mfmcweight^* \end{pmatrix}^T$.
If $c_0 c_1^{-1} > \rho^{-2} - 1$, there is a closed-form solution to the global minimizer of Optimization~\ref{opt:est_variance} \citep{peherstorfer2016}\footnote{Throughout this work, when these conditions are not met, we rely on an isotonic optimization algorithm to find the exact optimal sample allocation. See \Cref{sec:pava} for technical details of the algorithm.}.
Under this particular two-model scenario with unit variances, the solution is:
\begin{equation}\label{eq:optimal_bfmc}
    n_0^* = \frac{C}{c_0 + c_1 r}, \quad n_1^* = n_0^* r, \quad \mfmcweight^* = \rho 
\end{equation}
where
\begin{equation}
    r = \frac{c_0 \rho^2}{c_1 (1 - \rho^2)}. \nonumber
\end{equation}

Suppose we obtain a small number of pilot samples, $\bz_1, \bz_2, \dots, \bz_{\npilot} \sim \mathcal{P}$ (e.g., $\npilot \approx 5$) to estimate the correlation via the sample correlation $\hat{\rho}_{\npilot}$,
\begin{equation} \label{eq:sample_corr_formula}
    \hat{\rho}_{\npilot} := \frac{\sum_{i = 1}^{\npilot}(y^i_0 - \mu_0) (y^i_1 - \mu_1)}{\sqrt{\sum_{i = 1}^{\npilot}(y^i_0 - \mu_0) \sum_{i = 1}^{\npilot}(y^i_1 - \mu_1)}},
\end{equation}
where $\mu_0$ is the hifi model average and $\mu_1$ is the lofi model average.
Please note that in cases where the pilot sample size is obvious, we shorten the sample correlation notation to $\hat{\rho}$.
If we solve Optimization~\ref{opt:est_variance} using the sample correlation, we are minimizing a different objective function, namely $\estvar(\params; \hat{\rho}_{\npilot})$, and hence we obtain a different minimizer, $\hat{\params} = \begin{pmatrix} \hat{n}_0 & \hat{n}_1 & \hat{\mfmcweight} \end{pmatrix}^T$.
It follows that $\estvar(\hat{\params}; \rho) > \estvar(\params^*; \rho)$ almost surely for all $\rho \in (0, 1)$, and thus the estimator variance obtained using the sample correlation is always worse than the true estimator variance obtained under the true correlation.
\Cref{fig:main_idea} provides a graphical illustration for this suboptimality characterization.\footnote{Although \cite{peherstorfer2016} does not guarantee the convexity of the optimization as suggested by our illustration, we chose to represent the variance surface with a convex function to clearly illustrate the idea since there is a closed-form global optimum.}

\begin{figure}[h]
    \centering
    \includegraphics[width=0.75\textwidth]{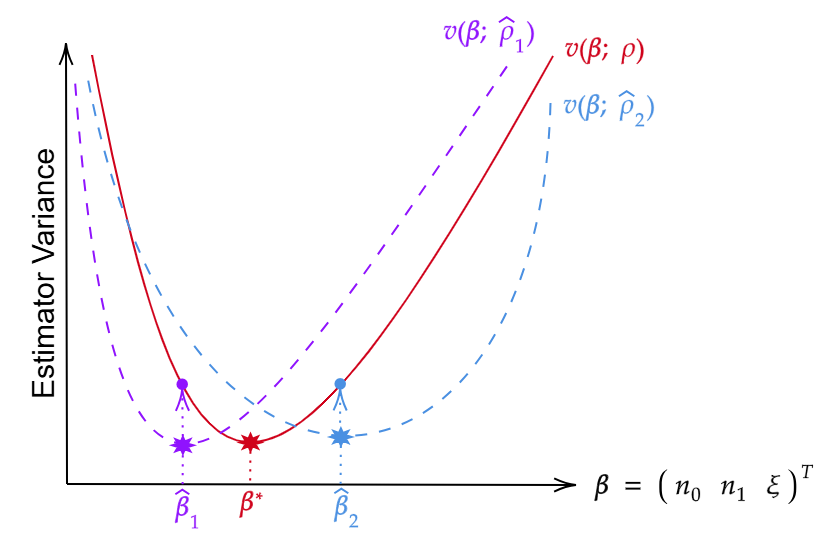}
    \caption{An illustration of the suboptimality challenge this paper addresses. The minimizer resulting from using the sample correlation, $\hat{\rho}_{\npilot}$, is random since it is downstream of its sampling distribution. As such, projecting the optimized $\hat{\params}$ values onto the true estimator variance surface guarantees that the optimized estimator variance is suboptimal with respect to the true correlation.}
    \label{fig:main_idea}
\end{figure}

\subsection{The discrepancy function quantifies suboptimality} \label{sec:discrepancy_function}
Suppose now that we have $\nmodels \in \mathbb{N}$ models and their associated costs.
For a fixed covariance matrix $\bSigma \in \mathcal{S}_{++}^{\nmodels}$ (i.e., the set of symmetric positive definite matrices of dimension $\npilot \times \npilot$), let $\estvar(\cdot; \bSigma): \mathbb{R}^p \to \mathbb{R}$ be the multi-fidelity estimator variance we wish to optimize, where $p$ denotes the estimator hyperparameter dimension, i.e., the total number of sample allocations and weights for the $\nmodels$ models.
We generalize Optimization~\ref{opt:est_variance} as follows:
\begin{align}
    \min_{\params} \quad& v(\params; \bSigma) \label{opt:variance_opt_gen} \\
    \text{subject to} \quad& \bc^T \params \leq C, \nonumber
\end{align}
where $\params =  \begin{pmatrix} \bn & \mfmcweightbold \end{pmatrix}^T$ such that $\bn$ is the sample allocation vector, $\mfmcweightbold$ is the weights vector, $C \in \mathbb{R}_+$ is the total computational budget, and $\bc \in \mathbb{R}^p$ denotes the individual model costs for the first $\nmodels$ components and zeros for the remaining $p - \nmodels$.
Let $\params^*$ denote the global minimizer of Optimization~\ref{opt:variance_opt_gen}.
Suppose we obtain $\npilot \in \mathbb{N}$ pilot samples by generating $\npilot$ I.I.D. samples $\bz_i \sim \mathcal{P}$ (the distribution over the input vectors) for $i = 1, 2, \dots, \npilot$, producing the model outputs $\by_1, \by_2, \dots, \by_{\npilot} \in \mathbb{R}^{\nmodels}$ (henceforth compactly referred to as $\bar{\by}$) with the associated sample covariance matrix $\hat{\bSigma}_{\npilot}$.
Let $\hat{\params}$ denote the global minimizer of Optimization~\ref{opt:variance_opt_gen} but with $\hat{\bSigma}_{\npilot}$ replacing the true covariance.

\vspace{2mm}
Intuitively, we wish to quantify the estimator variance suboptimality resulting from using the pilot samples to estimate the covariance matrix.
We propose the \emph{discrepancy function} as one such option.
\begin{definition}[Discrepancy Function]
    Define the function $\discrepancy: \mathcal{S}_{++}^{\nmodels} \times \mathcal{S}_{++}^{\nmodels} \to \mathbb{R}_+$ as follows,
    \begin{equation} \label{eq:discrepancy_function}
        \discrepancy(\bSigma_1, \bSigma_2) := \log \left( \frac{\estvar(\params_1; \bSigma_2)}{\estvar(\params_2; \bSigma_2)} \right),
    \end{equation} 
    where $\params_1$ minimizes $\estvar(\params; \bSigma_1)$ and $\params_2$ minimizes $\estvar(\params; \bSigma_2)$, both subject to the constraint $\bc^T \params \leq C$.
\end{definition}
\vspace{2mm}

In practice, to evaluate the performance of the sample covariance $\hat{\bSigma}_{\npilot}$ for a fixed true covariance matrix $\bSigma$, we consider $\discrepancy(\hat{\bSigma}_{\npilot}, \bSigma)$.
Clearly, if $\bSigma_1 = \bSigma_2$, then $\discrepancy(\bSigma_1, \bSigma_2) = 0$, otherwise it is positive, reflecting a notion of distance between the obtained estimator variances.
We have refrained from referring to this function as a distance metric since it is not symmetric.
We have also refrained from referring to this function as a loss function despite our inspiration from SDT.
Although our discrepancy function could be characterized as a loss function, we primarily care about the discrepancy of the optimal estimator variances rather than the proximity of $\hat{\bSigma}_{\npilot}$ to $\bSigma$, so we choose this different name to emphasize the distinction.

There are other valid ways to define a discrepancy function such that it quantifies estimator variance suboptimality due to covariance misspecification.
An obvious alternative is a function of the form, $\discrepancy_{regret}(\bSigma_1, \bSigma_2) = \estvar(\params_1; \bSigma_2) - \estvar(\params_2; \bSigma_2)$, where we use the term ``regret'' to indicate its similarity to traditionally defined regret functions (see \cite{berger1985statistical, pred_learn_games}, for example).
Interestingly for MFMC, $\discrepancy_{regret}$ can be seen as a Bregman divergence associated with a particular convex function \cite{pred_learn_games}.
Through this connection, one can potentially make some interesting theoretical insights, but since the discrepancy as defined in \Cref{eq:discrepancy_function} is invariant to budget (see the following paragraph and \Cref{lemma:budget_invariance}), we focus the results in this paper around this choice.
Furthermore, we found our discrepancy function to be more numerically stable than the potential ``regret''-type option.

\subsection{The discrepancy function is invariant to estimator budget}
When using approximate control variate (ACV), which includes MFMC, estimators \cite{gorodetsky2020}, we can make some additional comments about the discrepancy function.
One particular quantity of interest (QoI) for a given multi-fidelity estimator is the \emph{variance reduction ratio} (VRR), quantifying how much the estimator reduces variance relative to the corresponding hifi MC estimator of the same budget. 
Defining the hifi MC estimator variance as $\estvar_0(\bSigma)$, the VRR associated with an estimator with hyperparameters $\params_1$ is defined as $\gamma(\params_1;\bSigma) =\frac{\estvar_0(\bSigma)}{\estvar(\params_1;\bSigma)}$. 

To analyze the behavior of the discrepancy function with respect to the budget $C$, we assume a continuous relaxation of the sample allocations, ignoring nearest-integer rounding constraints.
Under this relaxation, the estimator variance is inversely proportional to the total budget.
We can decompose the hyperparameters into $\params = \begin{pmatrix} C\tilde{\bn}&  \mfmcweightbold \end{pmatrix}^T$, where $\mfmcweightbold$ contains control variate weights and $\tilde{\bn}$ contains the budget-normalized sample allocations satisfying $\bc^T \tilde{\bn} \leq 1$.
The estimator variance then factors as $\estvar(\params; \bSigma) = \frac{1}{C} \tilde{\estvar}(\tilde{\params}; \bSigma)$, where $\tilde{\params} = \begin{pmatrix}  \tilde{\bn} & \mfmcweightbold \end{pmatrix}^T$ represents the budget-independent hyperparameters. 

Because the budget acts as a universal scaling term on both the numerator and the denominator, the VRR is a budget-agnostic measure of the efficiency of a multi-fidelity estimator.
This scaling property naturally extends to our discrepancy function.

\begin{lemma}[Discrepancy function budget invariance] \label{lemma:budget_invariance}
	Under the continuous relaxation of sample allocations, the discrepancy function in Definition~\ref{eq:discrepancy_function} is independent of the total budget $C$ and is equivalent to the log-ratio of the VRRs associated with $\params_1$ and $\params_2$:
	\begin{equation}\label{eq:discrepancy_vrr}
		\discrepancy(\bSigma_1, \bSigma_2) = \log \left( \frac{\gamma(\params_2; \bSigma_2)}{\gamma(\params_1; \bSigma_2)} \right),
	\end{equation}
	where $\params_1$ and $\params_2$ minimize the budget-normalized variances $\tilde{\estvar}(\tilde{\params}; \bSigma_1)$ and $\tilde{\estvar}(\tilde{\params}; \bSigma_2)$ respectively, subject to $\bc^T \tilde{\bn} \leq 1$.
\end{lemma}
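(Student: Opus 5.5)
The plan is to use the scaling factorization $\estvar(\params; \bSigma) = \frac{1}{C}\tilde{\estvar}(\tilde{\params}; \bSigma)$ stated just before the lemma, together with the change of variables $\params = (C\tilde{\bn}, \mfmcweightbold)$. This change of variables maps the budget-normalized feasible set $\{\bc^T\tilde{\bn} \le 1\}$ bijectively onto the feasible set $\{\bc^T \bn \le C\}$ of Optimization~\ref{opt:variance_opt_gen}. The weights $\mfmcweightbold$ are untouched by the map and do not enter the constraint, since the cost vector has zeros in those slots.

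The first step is to show that minimizers transform covariantly. For any fixed $\bSigma_j$, the factor $1/C$ is a positive constant. Hence $\params$ minimizes $\estvar(\cdot;\bSigma_j)$ over the $C$-feasible set if and only if the corresponding $\tilde{\params}$ minimizes $\tilde{\estvar}(\cdot;\bSigma_j)$ over the normalized set. So we may take $\params_j = (C\tilde{\bn}_j, \mfmcweightbold_j)$, where $\tilde{\params}_j$ is a normalized minimizer independent of $C$, for $j = 1, 2$.

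The main subtlety is well-definedness when the minimizer is not unique. Here the discrepancy is understood with a fixed selection of minimizer $\tilde{\params}_1$, made independently of $C$. The bijection guarantees that any choice at budget $C$ corresponds to such a choice. For the denominator, the optimal value is unique anyway.

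The second step is to substitute into Definition~\eqref{eq:discrepancy_function}:
\[
\discrepancy(\bSigma_1,\bSigma_2) = \log\frac{C^{-1}\tilde{\estvar}(\tilde{\params}_1;\bSigma_2)}{C^{-1}\tilde{\estvar}(\tilde{\params}_2;\bSigma_2)} = \log\frac{\tilde{\estvar}(\tilde{\params}_1;\bSigma_2)}{\tilde{\estvar}(\tilde{\params}_2;\bSigma_2)}.
\]
The $C$ cancels, and the right-hand side involves only budget-free quantities. This gives budget invariance.

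The final step gives the VRR form. Both VRRs are evaluated under the same $\bSigma_2$ and share the numerator $\estvar_0(\bSigma_2)$. Therefore
\[
\frac{\gamma(\params_2;\bSigma_2)}{\gamma(\params_1;\bSigma_2)} = \frac{\estvar(\params_1;\bSigma_2)}{\estvar(\params_2;\bSigma_2)},
\]
because $\estvar_0(\bSigma_2)$ cancels. Taking logarithms yields \eqref{eq:discrepancy_vrr}. One may also note that $\estvar_0(\bSigma_2) = \bSigma_{2,00}/(C/c_0)$ also scales as $1/C$, so each VRR is individually budget-free. This is consistent with the lemma but not needed for the identity.
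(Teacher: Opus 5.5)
Your proof is correct and follows the paper's argument: the common $\estvar_0(\bSigma_2)$ cancels in the VRR ratio, and the common factor $1/C$ from $\estvar(\params;\bSigma) = \frac{1}{C}\tilde{\estvar}(\tilde{\params};\bSigma)$ cancels in the variance ratio. Your explicit check that minimizers transform covariantly under $\bn = C\tilde{\bn}$, and your remark about fixing a $C$-independent choice of minimizer, make precise a step the paper leaves implicit by putting it into the lemma's hypotheses.
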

\begin{proof}
    By the definition of the variance reduction ratio, $\gamma(\params_2; \bSigma_2) / \gamma(\params_1; \bSigma_2) = \estvar(\params_1; \bSigma_2) / \estvar(\params_2; \bSigma_2)$. Factoring the estimator variances into their budget-dependent and budget-independent components yields:
    \begin{equation*}
        \frac{\estvar(\params_1; \bSigma_2)}{\estvar(\params_2; \bSigma_2)} = \frac{\frac{1}{C} \tilde{\estvar}(\tilde{\params}_1; \bSigma_2)}{\frac{1}{C} \tilde{\estvar}(\tilde{\params}_2; \bSigma_2)} = \frac{\tilde{\estvar}(\tilde{\params}_1; \bSigma_2)}{\tilde{\estvar}(\tilde{\params}_2; \bSigma_2)}.
    \end{equation*}
    The scale factor $1/C$ cancels, demonstrating that the ratio—and consequently the discrepancy function $\discrepancy(\bSigma_1, \bSigma_2)$—is invariant to the total budget constraint $C$.
\end{proof}

\subsection{The expected discrepancy metric}
As described in \Cref{ss:suboptimality}, the discrepancy function becomes a random variable through the sampling distribution of the sample covariance which depends on the distribution of the inputs ($\mathcal{P}$) and the available models.
We assume $\by \sim \mathcal{N}(\bmu, \bSigma)$, i.e., that the model outputs follow a multivariate Gaussian distribution with both unknown mean and covariance.
This assumption is not necessary for the statement of our framework, but it does conveniently lead to the sampling distribution of the sample covariance,
\begin{equation} \label{eq:wishart_sampling}
    (\npilot - 1) \cdot \hat{\bSigma}_{\npilot} \sim \mathcal{W}(\npilot - 1, \bSigma),
\end{equation} 
where $\mathcal{W}(\npilot - 1, \Sigma)$ is a Wishart distribution with $\npilot - 1$ degrees of freedom and scatter matrix $\bSigma$. The associated sample covariance is defined as usual,
\begin{equation}
    \hat{\bSigma}_{\npilot} = \frac{1}{\npilot - 1} \sum_{i = 1}^{\npilot} \left(\by_i - \bmu_{\by} \right) \left( \by_i - \bmu_{\by} \right)^T,
\end{equation}
where $\mu_{\by}$ denotes the sample mean over the vectors $\by_i$.
It is well known in the statistics literature that the Wishart distribution characterizes the sampling distribution of the maximum likelihood estimator of the covariance of a multivariate Gaussian distribution (e.g., \cite{anderson2003}).
This assumption is particularly convenient since the Wishart distribution can be used to estimate statistical functionals of \Cref{eq:discrepancy_function}; we primarily consider the expectation,
\begin{equation} \label{eq:expected_discrepancy}
    \mathbb{E}_{\hat{\bSigma}_{\npilot}} \left[ \discrepancy \left(\hat{\bSigma}_{\npilot}, \bSigma \right) \right].
\end{equation}
As we show in \Cref{sec:DDMM}, we use this notion of expected discrepancy to choose a different estimator of the covariance matrix, as it provides a quantification of a multi-fidelity estimator's (e.g., MFMC) expected suboptimality with respect to pilot sample variability.

\section{An empirical investigation into the robustness of multi-fidelity estimators}\label{sec:robustness}

Now that we have introduced the problem of pilot sampling and defined a discrepancy metric to quantify the suboptimality associated with pilot sampling, we will empirically investigate how multi-fidelity estimators perform in terms of pilot sampling robustness.
Specifically, we will compare a variety of popular multi-fidelity sampling-based estimators across pilot sampling sizes and modeling scenarios to see if there are specific estimators that tend to be more or less susceptible to pilot-sampling variability.
The estimators we compare here include:
\begin{itemize}
	\item Two highly constrained sample allocation strategies: MFMC \citep{peherstorfer2016} and weighted recursive difference (WRDIFF) (an extension on Multilevel Monte Carlo (MLMC) \citep{Giles_2015} that uses optimal weights),
	\item Generalized ACV estimators: approximate control variance independence samples (ACVIS) \citep{gorodetsky2020} and the recursive estimators \citep{bomarito2022},
    \begin{itemize}
        \item Generalized multi-fidelity multiple recursion (GMFMR, generalizing MFMC),
        \item Generalized recursive difference multiple recursion (GRDMR, generalizing WRDIFF),
        \item and generalized independent samples multiple recusion (GISMR, generalizing ACVIS),
    \end{itemize}
	\item and Multilevel best linear unbiased estimator (MLBLUE) \citep{schaden2020}, which adopts a different ansatz for constructing a multi-fidelity estimator based on ordinary least squares and model groupings but that can be shown to be a version of ACV with a unique and highly parameterized sample allocation and weights construction (see \citep{gorodetsky2024}).
\end{itemize}

To carry out this test, we perform an ordered-model experiment extending the toy problem from \Cref{ss:suboptimality} to multiple low-fidelity models, $\nmodels=4$.
We sweep over 8 values of $\rho_{01} \in [0.5,0.95]$ then degrade the subsequent cross-correlations by $70\%$ between each lofi model (and rounding to the nearest positive semi-definite matrix when necessary).
The hifi model is given unit cost and lofi costs are reduced by a factor of $10$ between each level, constructing 8 different modeling scenarios with their own oracle covariance matrices and associated model costs.
For each unique modeling scenario and at a range of pilot sample sizes, we draw $20$ sample covariance matrices $(\npilot - 1) \hat{\bSigma} \sim \wishart (\bSigma,\npilot-1)$, construct the associated estimators with hyperparameters set according to each $\hat{\bSigma}$, and evaluate each estimator's true performance under the oracle covariance.
From these data, we average over the $20$ trials and the $8$ modeling scenarios to generate the expected true estimator variances in \Cref{fig:estvars_ordered} and the expected discrepancy values in \Cref{fig:discrepancies_ordered} as a function of $\npilot$. 

\begin{figure}[ht!]
        \centering
        \includegraphics[width=0.75\textwidth]{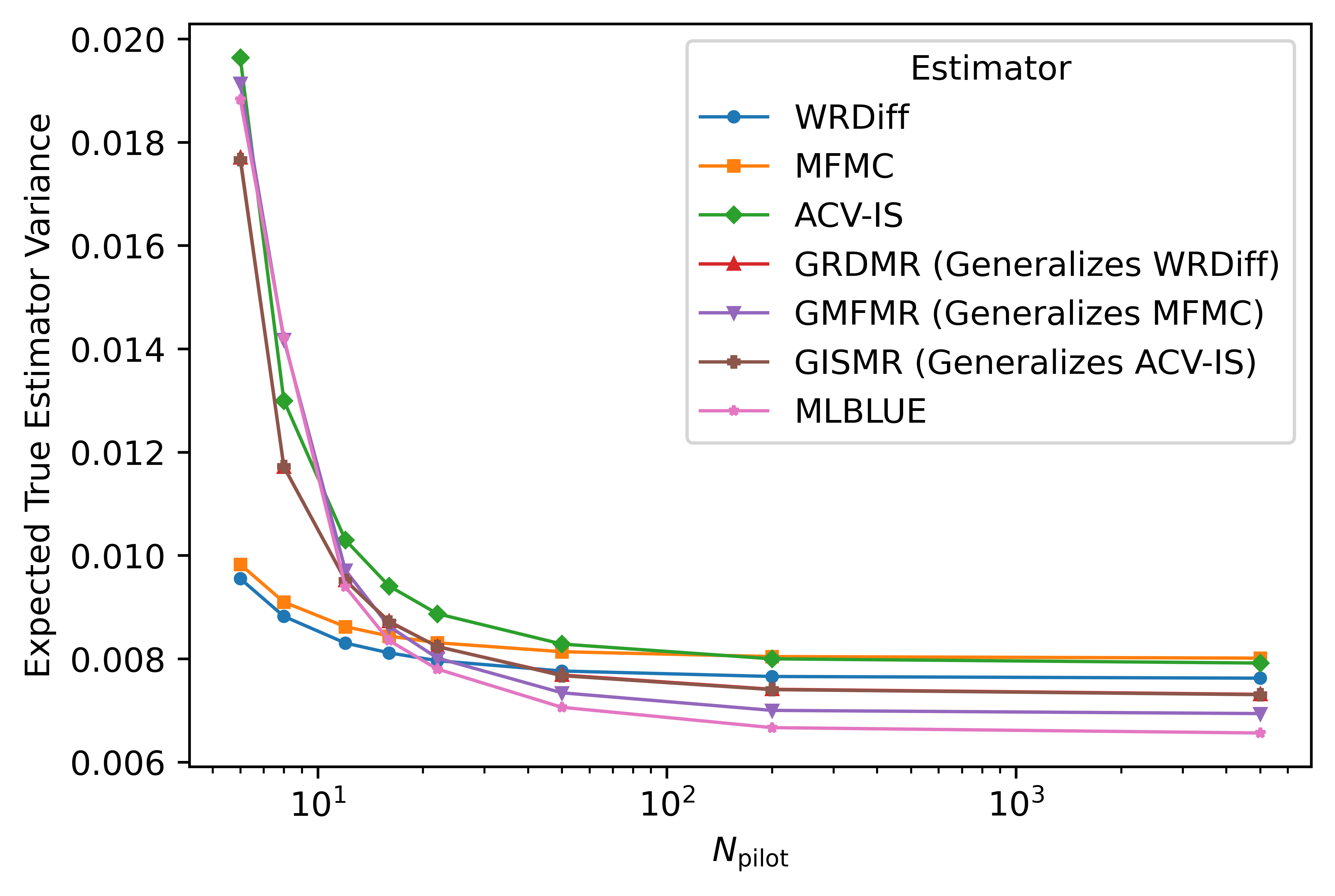}
        \caption{Expected true estimator variances for each multi-fidelity estimator, across different pilot sample sizes. This is a measure of the \emph{absolute} performance of each estimator under limited pilot samples.}
	\label{fig:estvars_ordered}
\end{figure}

The results in \Cref{fig:estvars_ordered} are striking -- while more general estimators such as MLBLUE and generalized ACV outperform when many pilot samples are available, they actually can have far greater estimator variance under smaller pilot sample sizes than the more constrained estimators MFMC and WRDIFF.
There appears to be a clear trade-off between estimator robustness and flexibility in terms of absolute estimator performance under finite pilot samples.
The exact point where MFMC becomes less performant in terms of true estimator variance is problem dependent, but throughout our tests it seems to outperform other methods whenever fewer than $10$ pilot samples are available and there are more than two models.

\begin{figure}[ht!]
        \centering
        \includegraphics[width=0.75\textwidth]{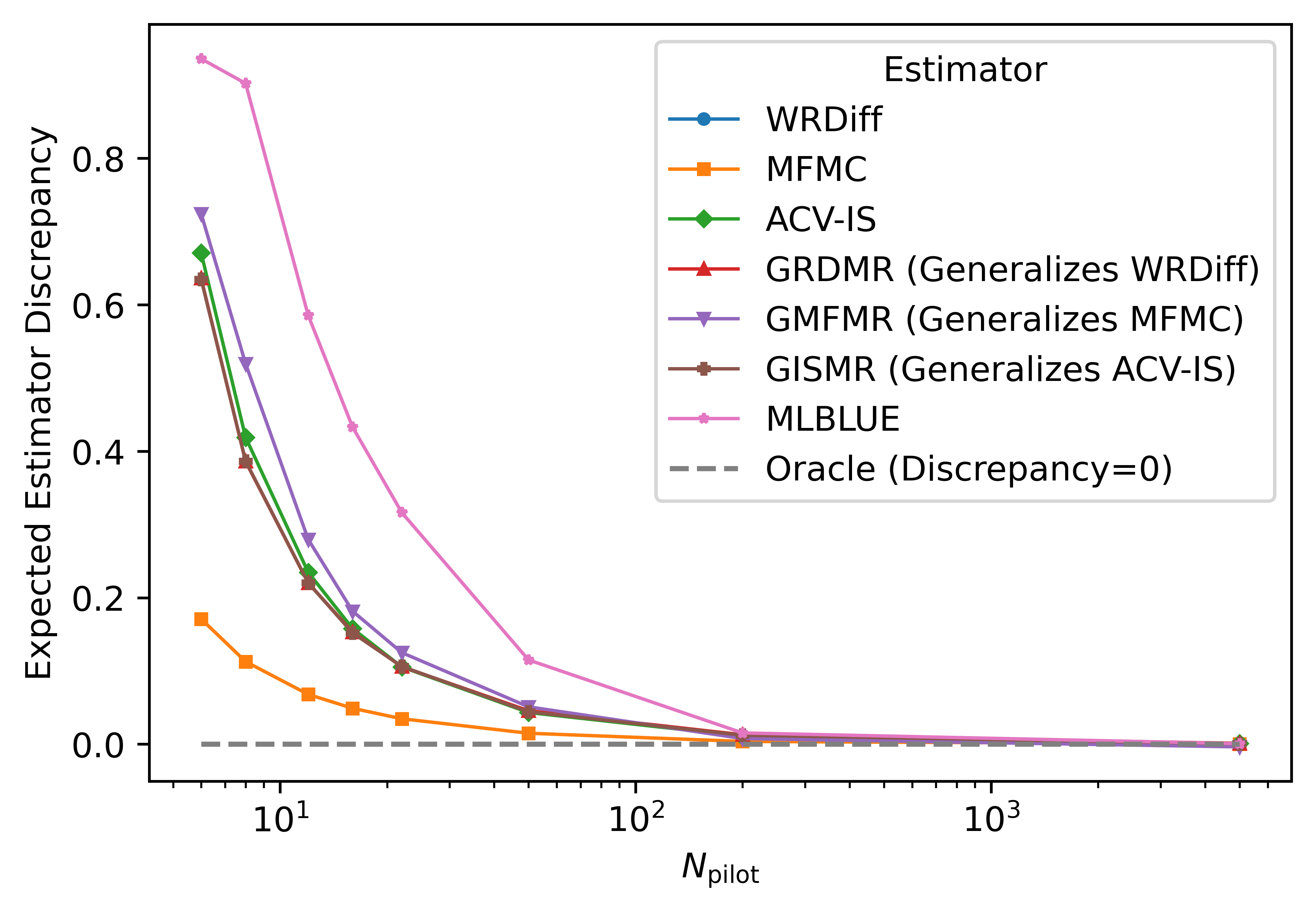}
        \caption{Expected estimator discrepancies for each multi-fidelity estimator, across different pilot sample sizes. This is a measure of the \emph{relative} performance of each estimator under limited pilot samples. The WRDIFF and MFMC curves are completely overlapping in this particular figure.}
	\label{fig:discrepancies_ordered}
\end{figure}

The results in \Cref{fig:discrepancies_ordered} further corroborate this finding. Since each estimator discrepancy is with respect to its own oracle (best-case) estimator variance, this plot shows the relative sensitivity of each estimator to pilot sampling variability.
While more general estimators may eventually overtake MFMC and WRDIFF in terms of absolute performance, they are unequivocally more sensitive to pilot sampling variability in terms of the expected estimator discrepancy. 

We posit two possible explanations for these effects. 
First, the more general estimators have a tendency to ``overfit" their sample allocations to pilot sampling variability.
Since these estimators have fewer constraints on their sample-allocation optimization problem, they make use of every cross-correlation in the provided covariance matrix and are affected by spurious cross-correlation 
pilot-sampling noise.
In contrast, the sample-allocation constraints of MFMC and WRDIFF regularize the associated optimization solutions and make the estimator variance a function of just $\nmodels-1$ cross-correlations, leading to far less variability under the same pilot sampling variability.

Second, calculating the optimal weights can become highly sensitive under these more general estimators due to the linear algebra involved.
Generalized ACV methods require the inversion of a dense $(\nmodels-1)\times(\nmodels-1)$ covariance matrix to find the variance-optimal weights while the corresponding MFMC and WRDIFF matrices are sparse and diagonal.
Since this matrix that must be inverted is a direct function of the provided sample covariance matrix, its inversion can further amplify pilot variability, making MFMC and WRDIFF more robust since inverting a diagonal matrix is more stable.
The advantage of diagonalizing this matrix is especially salient when considering that ACVIS, while it is further generalized by GISMR, does not seem to benefit from the same advantages as MFMC and WRDIFF since it also involves a dense matrix inversion.
Of note, the corresponding matrix inversion for MLBLUE can grow even larger, over all possible models in all possible model groupings, further exacerbating this issue and making it the most sensitive estimator to pilot sampling variability.

A related concern regarding robustness is the \textit{projected} estimator variance, which refers to the estimator variance obtained when evaluating under the sample covariance, $\estvar(\hat{\params};\hat{\bSigma})$.
Since practitioners generally consider this projected variance when deciding how much to trust one's estimator outputs, there can be significant overconfidence issues when it is significantly less than the true estimator error evaluated using the oracle covariance, $\estvar(\hat{\params};\bSigma)$.
Under the above test setup, we plot the ratio between the true estimator variance $\estvar(\hat{\params};\bSigma)$ and the projected estimator variance $\estvar(\hat{\params};\hat{\bSigma})$ across $\npilot$ as a measure of estimator overconfidence in \Cref{fig:ratios_ordered}. We indeed find that the more generalized estimators can underpredict the true estimator variance by an order of magnitude at small pilot sample sizes while MFMC produces far more realistic projected estimator variances even when just $\sim 5$ pilot samples are available.

\begin{figure}[ht!]
        \centering
        \includegraphics[width=0.75\textwidth]{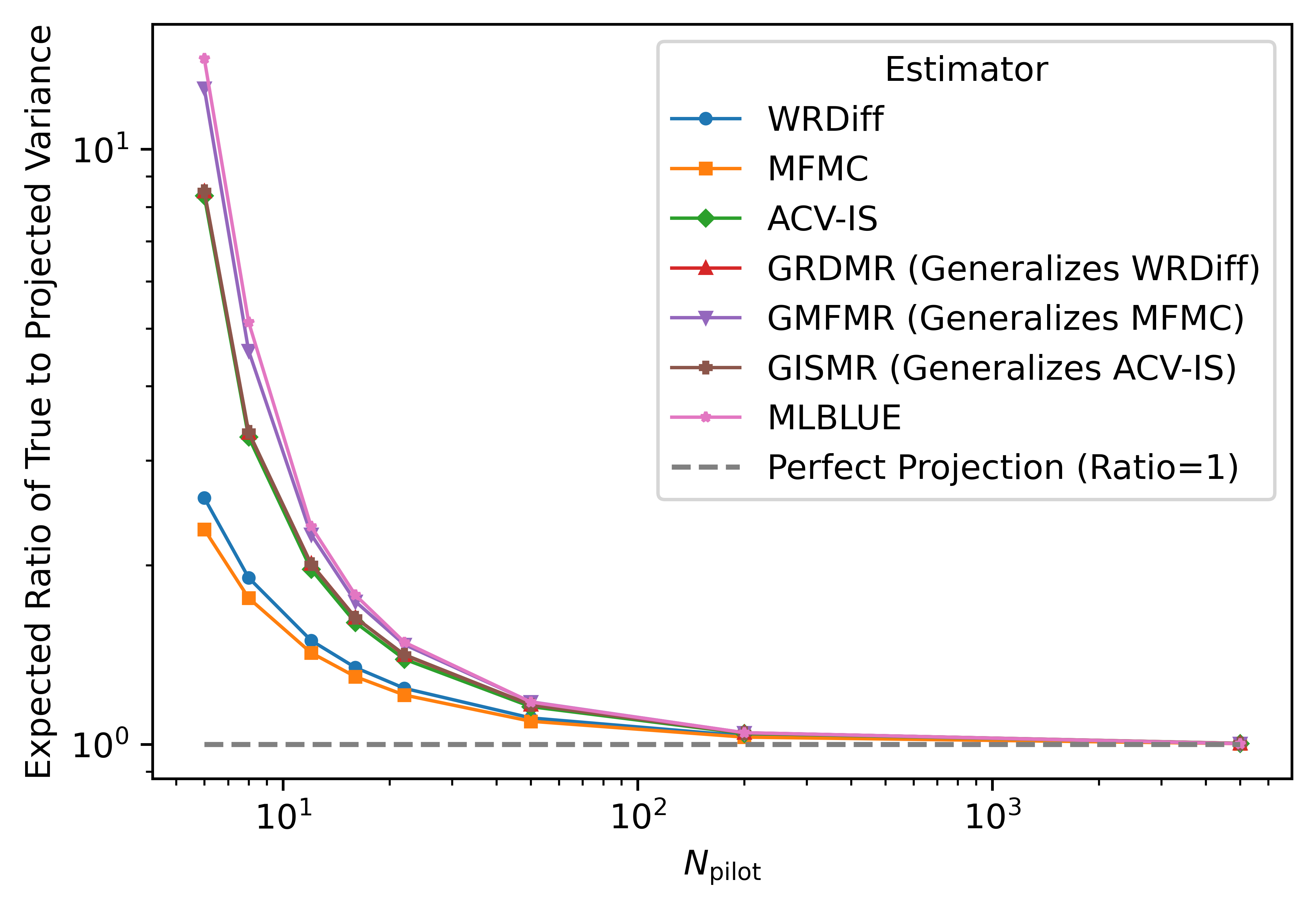}
        \caption{The expected ratio of true estimator variances, $\estvar(\hat{\params};\bSigma)$, to projected estimator variances, $\estvar(\hat{\params};\hat{\bSigma})$ across different pilot sample sizes. This is a measure of the \emph{overconfidence} risk of each estimator under limited pilot samples.}
	\label{fig:ratios_ordered}
\end{figure}

We also repeat a fully random version of this experiment without ordered models (see \Cref{sec:lkj}) and find that the trends above remain under more general model ensembles, albeit with slightly worse absolute performance for MFMC.
Motivated by these results and by the analytical tractability of the optimal MFMC hyperparameters, we limit ourselves to MFMC estimators for the remainder of this work.
However, the formulations herein can be substituted with their counterparts for other estimators, albeit at significantly larger computational burdens, which we leave to future work.

\subsection{A global sensitivity analysis of MFMC suboptimality}\label{ss:gsa}

Global sensitivity analysis (GSA) is a statistical method for evaluating how the uncertainty in model inputs affects the model outputs.
These methods generally assign a sensitivity index to each input of the system, with a larger index indicating a larger impact on the output uncertainty.
In our case, we are interested in how the uncertainty in the sample covariance matrix $\hat{\bSigma}$ affects the uncertainty in the estimator suboptimality $\discrepancy(\hat{\bSigma},\bSigma)$ for some true covariance $\bSigma$.
We use GSA to identify which components of the sample covariance matrix have the greatest impact on the uncertainty in estimator suboptimality, helping guide our proposed correction strategy in \Cref{sec:DDMM}.

We adopt a variance-based approach using Shapley values \citep{owen2017} to properly account for the dependent inputs, namely the sample correlation  $\hat{\rho}$ and sample standard deviations $\hat{\sigma}_0,\hat{\sigma}_1$. 
Rooted in game theory, the Shapley GSA method quantifies each input's fair contribution to the discrepancy metric's variability by averaging each input's marginal contribution across all possible coalitions.
\Cref{sec:shap_app} shows the mathematical formulation and computational details of this approach.

Sweeping over 100 values of $\rho \in [0.01, 0.99]$ for the setup from \Cref{ss:suboptimality}, the results of the Shapley-based GSA are shown in \Cref{fig:shapley}.
Clearly, the main contributor to the uncertainty in the true estimator variance from the sample covariance uncertainty is the correlation coefficient between the low- and high-fidelity model, with an average (over $\rho$) Shapley value of 0.74, while the low- and high-fidelity model standard deviations have an average Shapley value of just 0.09 and 0.17, respectively.
This may be explained by two phenomena.
First, the VRR, and thus the optimal sample allocation, under oracle covariance information is independent of the standard deviations.
This point is elucidated in \citep{marten2023} and therefore removes the standard deviations entirely from the numerator of the discrepancy function, as defined in \Cref{eq:discrepancy_vrr}, for any ACV estimator.
Second, as highlighted in \Cref{eq:optimal_bfmc}, the MFMC weights vector is a function of the \emph{ratio} between each lofi standard deviation and the hifi standard deviation, which may be easier to estimate and more statistically controlled than individual standard deviations when using shared pilot samples.
Motivated by these results, we focus our attention on the correlation coefficient when constructing a protocol to mitigate the effects of pilot sampling variability for MFMC.

\begin{figure}[h]
    \centering
    \includegraphics[width=0.75\textwidth]{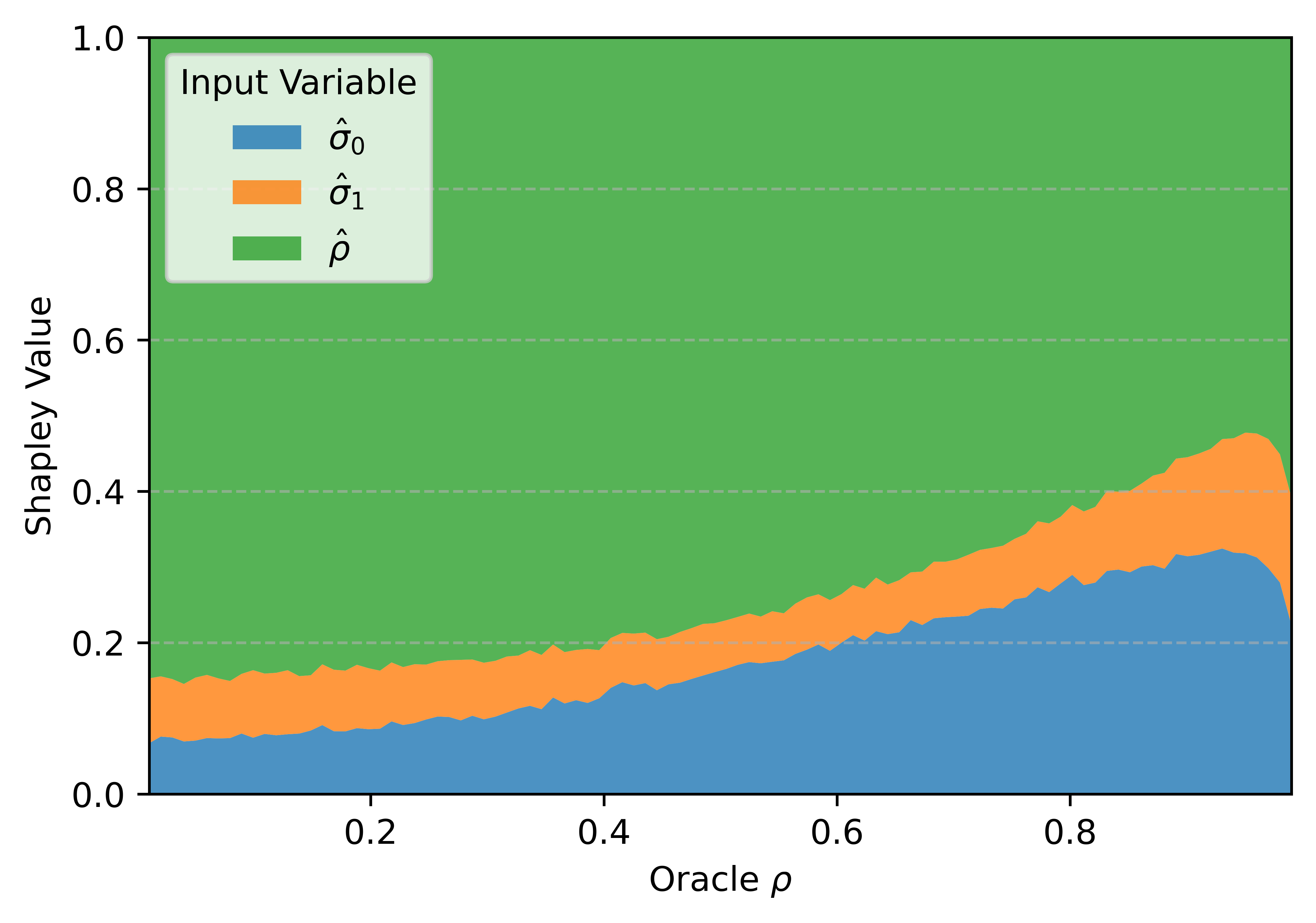}
    \caption{Stacked Shapley values for each component of the sample covariance matrix, for the variance of discrepancy function using the bi-fidelity MFMC estimator. Each $\rho$ corresponds to a different corresponding oracle $\bSigma$, with the oracle model variances each fixed to $1$. The main contributor to the uncertainty in the estimator suboptimality, as quantified by the Shapley values, is clearly the correlation between the high- and low-fidelity model, while the standard deviations are much less impactful.}
    \label{fig:shapley}
\end{figure}

\section{Producing an improved covariance estimator from pilot samples}\label{sec:DDMM}
Using the discrepancy function defined in \Cref{eq:discrepancy_function}, this section provides a procedure to select an improved estimate of the true covariance matrix under pilot sample uncertainty.
Intuitively, the sample covariance computed from the pilot samples provides some information about the true covariance.
Under the Gaussian model outputs assumption, one such piece of information can be probabilistically identified as a $(1 - \alpha)$ confidence set (for $\alpha \in (0, 1)$) in the space of positive definite (PD) matrices.
Leveraging this information, our procedure looks to pick an estimator that minimizes the worst-case expected discrepancy within that $(1 - \alpha)$ confidence set.
The section is structured as follows.
\Cref{sec:evaluation} defines how we characterize estimator optimality using the language of SDT.
\Cref{sec:dd_minimax} then defines a data-driven minimax construction along with key mathematical intuition and theory.
Finally, \Cref{sec:implementation} describes the estimator implementation.
A concise description of the procedure is given by \Cref{alg:ddmm}.

\subsection{Evaluating a covariance estimator} \label{sec:evaluation}
Let $\mathcal{H}$ denote the space of all possible covariance estimators that are a function of observed data such that for each $h \in \mathcal{H}$, $h(\bar{\by})$ produces a PD matrix estimating the true covariance (recall, we use $\bar{\by}$ to compactly refer to the collection of $N$ pilot samples). 
The expected discrepancy function provides a way to compare the performance of different covariance estimators such that the structure of the multi-fidelity scenario is taken into account.
We ideally desire an estimator ($h'$) that is better (i.e., lower expected discrepancy) than the sample covariance for all possible covariance matrices,
\begin{equation} \label{eq:dominating_estimator_sample_covariance}
    \mathbb{E}_{\bar{\by}} \left[ \discrepancy \left(h'\left(\bar{\by} \right), \bSigma \right) \right] \leq \mathbb{E}_{\bar{\by}} \left[ \discrepancy \left( \hat{\bSigma}_{\npilot}, \bSigma \right) \right], \; \; \forall \; \bSigma \in \mathcal{S}^M_{++}.
\end{equation}
In the language of SDT, the existence of $h'$ would make $\hat{\bSigma}_{\npilot}$ \emph{inadmissible} or equivalently, \emph{dominated} by $h'$ (see chapter one of \cite{berger1985statistical}).
More specific to the multi-fidelity problem setting, we wish to find an estimator that is better than the sample covariance for known types of covariance matrices.
For instance, in the scenario explored in the \Cref{ss:suboptimality}, we wish to find an estimator that is better than the sample covariance when $\rho \in (l, u) \subset (0, 1)$.

\subsection{Data-driven minimax (DDMM) construction} \label{sec:dd_minimax}
With the expected discrepancy defined by \Cref{eq:expected_discrepancy}, we define a class of procedures producing a covariance estimator.
We restrict our attention to \emph{adjustments} of the sample covariance matrix and define an adjustment as follows.

\begin{definition}[Adjustment] \label{def:adjustment_function}
    Let $\Theta$ be a space in which we may define a decision (e.g., a Euclidean parameter space like $\mathbb{R}^s$).
    A function $g:\Theta \times \mathcal{S}_{++}^M \to \mathcal{S}_{++}^M$ is considered an adjustment function if it maps the sample covariance matrix to a new covariance matrix.
\end{definition}

For example, let $\btheta \in \Theta \subset \mathbb{R}^s$ and suppose we have an adjustment function $g$ and have computed the sample covariance $\hat{\bSigma}_{\npilot}$ from the $\npilot$ pilot samples.
Then, we write the \emph{adjusted} sample covariance as $g(\btheta, \hat{\bSigma}_{\npilot})$.
We have intentionally limited the scope of \Cref{def:adjustment_function} to use the sample covariance matrix as an input, but this choice is not strictly necessary.
In principle, one could simply define a map from the observed data to a PD matrix.
For example, if we know \emph{a priori} that $\rho \in (0, 1)$, the adjustment function could take the form $\hat{\rho} \mapsto \max\{0, \hat{\rho} \}$.
SDT refers to $\Theta$ as the \emph{action space} and allows us to define a \emph{minimax adjustment} (similar to Definition 1 in Chapter 5V in \cite{berger1985statistical}).

\begin{definition}[Minimax Adjustment] \label{def:minimax_adjustment}
    Under an adjustment function $g$, we call $\btheta^*$ a minimax adjustment if it satisfies the following:
    \begin{equation} \label{eq:minimax_adjustment}
        \sup_{\bSigma \in \mathcal{S}_{++}^{\nmodels}} \mathbb{E}_{\hat{\bSigma}_{\npilot}} \left[ \discrepancy \left(g\left(\btheta^*, \hat{\bSigma}_{\npilot} \right), \bSigma \right) \right] = \inf_{\btheta \in \Theta} \sup_{\bSigma \in \mathcal{S}_{++}^{\nmodels}} \mathbb{E}_{\hat{\bSigma}_{\npilot}} \left[ \discrepancy \left(g\left(\btheta, \hat{\bSigma}_{\npilot} \right), \bSigma \right) \right].
    \end{equation}
\end{definition}

Unlike \cite{berger1985statistical}, we have parameterized the action space through the function $g$ to add structure.
This decision builds off our decision above to define adjustments with respect to the sample covariance.

If one were to compute $\btheta^*$ according to \Cref{def:minimax_adjustment}, it would be maximally robust with respect to the chosen adjustment function.
Furthermore, it would be computable ahead of any observed pilot samples, since its computation only relies upon the chosen variance estimator and statistical assumptions (i.e., Gaussian outputs).
Although this latter property may appear desirable, it could potentially be improved by incorporating the information from the pilot samples as indicated at the beginning of \Cref{sec:DDMM}.
Said differently, this covariance estimator is likely \emph{too} robust and would not render the sample covariance inadmissible as desired.

To address these points, we propose a similar minimax-style estimator that relies upon the pilot samples.
For each fixed adjustment parameter $\btheta$, we propose maximizing over a subset of all possible covariances matrices depending on the sample covariance.
Intuitively, we find a subset of $\mathcal{S}_{++}^{\nmodels}$ containing the true covariance matrix with high probability (e.g., probability $1 - \alpha$ for some $\alpha \in (0, 1)$) resulting in a maximization over a smaller set.
More formally, let $\alpha \in (0, 1)$ and $\hat{\bSigma}_{\npilot}$ be the sample covariance resulting from the pilot samples.
If a function $C_\alpha: \mathcal{S}_{++}^{\nmodels} \to \mathcal{S}_{++}^{\nmodels}$ operates such that
\begin{equation}
    \mathbb{P} \left(\bSigma \in C_\alpha\left( \hat{\bSigma}_{\npilot} \right) \right) \geq 1 - \alpha, \quad \forall \bSigma \in \mathcal{S}_{++}^{\nmodels},
\end{equation}
then $C_\alpha(\cdot)$ is referred to as a $1 - \alpha$ \emph{confidence set} of the true covariance matrix, $\bSigma$.
Note, the output of $C_\alpha(\cdot)$ is a random subset of $\mathcal{S}_{++}^{\nmodels}$ because the pilot samples are random and thus the sample covariance matrix is random.
Additionally, since our empirical results focus on adjustments of the sample correlation rather than the sample covariance, after this subsection we use $C_\alpha(\cdot)$ with the sample correlation ($\hat{\rho}_{\npilot}$) instead of the sample covariance matrix.
This confidence set allows the following modification of \Cref{def:minimax_adjustment}.

\begin{definition}[Data-Driven Minimax (DDMM) Adjustment] \label{def:dd_minimax_adjustment}
    In the context of an adjustment function $g$, we call $\hat{\btheta}(\bar{\by})$ a \emph{data-driven} minimax adjustment if it satisfies the following,
    \begin{equation} \label{eq:dd_minimax_adjustment}
        \sup_{\bSigma \in \hat{C}_\alpha} \mathbb{E}_{\hat{\bSigma}_{\npilot}} \left[ \discrepancy \left(g\left(\hat{\btheta}(\bar{\by}), \hat{\bSigma}_{\npilot} \right), \bSigma \right) \right] = \inf_{\btheta \in \Theta} \sup_{\bSigma \in \hat{C}_\alpha} \mathbb{E}_{\hat{\bSigma}_{\npilot}} \left[ \discrepancy \left(g\left(\btheta, \hat{\bSigma}_{\npilot} \right), \bSigma \right) \right],
    \end{equation}
    where $\hat{C}_\alpha := C_\alpha\left( \hat{\bSigma}_{\npilot} \right)$.
\end{definition}

For all $\btheta \in \Theta$, the worst-case expected discrepancy is probabilistically guaranteed to be lower for the data-driven formulation since the maximization is over a smaller set.
Finding such an adjustment carries the intuition that we do not want to protect against unfavorable covariance scenarios if they are sufficiently far from our sample covariance.
Alternatively, we still wish for our estimator to be probabilistically robust in the sense that the DDMM expected discrepancy is an upper bound on the expected discrepancy at the true covariance matrix. 
The above intuition is formalized in \Cref{prop:prob_bound}.

\begin{proposition}[Inspired by Proposition 1 in \cite{chen2022_dro}] \label{prop:prob_bound}
    Let $\hat{\btheta}(\bar{\by})$ denote the DDMM adjustment specified by \Cref{def:dd_minimax_adjustment}.
    Let $\eta^* := \mathbb{E}_{\hat{\bSigma}_{\npilot}} \left[ \discrepancy \left(g\left(\hat{\btheta}(\bar{\by}), \hat{\bSigma}_{\npilot} \right), \bSigma^* \right) \right]$ for any true underlying covariance matrix, $\bSigma^*$.
    Let $\hat{\eta} := \mathbb{E}_{\hat{\bSigma}_{\npilot}} \left[ \discrepancy \left(g\left(\hat{\btheta}(\bar{\by}), \hat{\bSigma}_{\npilot} \right), \bSigma' \right) \right]$ where $\bSigma'$ maximizes the expected discrepancy over all $\bSigma \in C_\alpha \left(\hat{\bSigma}_{\npilot} \right)$.
    Then,
    \begin{equation}
        \mathbb{P}\left(\hat{\eta} \geq \eta^* \right) \geq 1 - \alpha.
    \end{equation}
\end{proposition}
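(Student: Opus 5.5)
The argument is an event inclusion followed by the coverage guarantee of the confidence set; no property of the discrepancy function or of $g$ is needed. Fix the true covariance $\bSigma^*$ and the realized pilot data $\bar{\by}$. This determines $\hat{\bSigma}_{\npilot}$, the confidence set $\hat{C}_\alpha = C_\alpha(\hat{\bSigma}_{\npilot})$, and the DDMM adjustment $\hat{\btheta}(\bar{\by})$. Both $\hat{\eta}$ and $\eta^*$ are then deterministic functions of the observed data. The inner expectation is over an independent copy of the sample covariance, so $\hat{\btheta}$ is held fixed inside it. The only randomness in the event $\{\hat{\eta} \geq \eta^*\}$ therefore comes from the pilot draw through $\hat{C}_\alpha$ and $\hat{\btheta}$.

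For fixed $\btheta$, write
\[
F(\btheta, \bSigma) := \mathbb{E}_{\hat{\bSigma}_{\npilot}}\left[\discrepancy\left(g(\btheta, \hat{\bSigma}_{\npilot}), \bSigma\right)\right].
\]
In this notation $\eta^* = F(\hat{\btheta}, \bSigma^*)$. By the choice of $\bSigma'$ as a maximizer over $\hat{C}_\alpha$, we have $\hat{\eta} = F(\hat{\btheta}, \bSigma') = \sup_{\bSigma \in \hat{C}_\alpha} F(\hat{\btheta}, \bSigma)$. If the supremum is not attained, I would define $\hat{\eta}$ directly as the supremum; the argument below is unchanged.

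The key step is the inclusion
\[
\{\bSigma^* \in \hat{C}_\alpha\} \subseteq \{\hat{\eta} \geq \eta^*\}.
\]
This holds because, on the left event, $\bSigma^*$ is one of the candidates in the supremum. Hence $F(\hat{\btheta}, \bSigma^*) \leq \sup_{\bSigma \in \hat{C}_\alpha} F(\hat{\btheta}, \bSigma) = \hat{\eta}$. Monotonicity of probability then gives $\mathbb{P}(\hat{\eta} \geq \eta^*) \geq \mathbb{P}(\bSigma^* \in C_\alpha(\hat{\bSigma}_{\npilot}))$. Because $C_\alpha$ is a $1-\alpha$ confidence set, this is at least $1-\alpha$ for every $\bSigma^*$, which proves the claim.

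I expect the main obstacle to be bookkeeping and measurability, not the inequality itself. One must state clearly that the probability is taken over the pilot data distribution under $\bSigma^*$, and not over the inner expectation variable. One must also note that the event $\{\hat{\eta} \geq \eta^*\}$ should be measurable; otherwise the statement can be read as an inner probability bound, for which the inclusion argument still suffices. If measurability is a concern, I would assume $C_\alpha$ is such that $\{\bSigma^* \in C_\alpha(\hat{\bSigma}_{\npilot})\}$ is an event, which the definition of a confidence set already presupposes.
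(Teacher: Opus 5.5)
Your proposal is correct and matches the paper's proof: on the event $\bSigma^* \in C_\alpha(\hat{\bSigma}_{\npilot})$, the true covariance is one of the candidates in the maximization, so $\hat{\eta} \geq \eta^*$, and $\mathbb{P}(\hat{\eta} \geq \eta^*)$ is then bounded below by the coverage probability, which is at least $1-\alpha$. Your added remarks are sensible clarifications that the paper leaves implicit: replacing a non-attained maximum by the supremum, holding $\hat{\btheta}$ fixed inside the inner expectation, and noting that the outer probability is over the pilot data and must be measurable.
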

\begin{proof}
    By definition, the probability that $\bSigma^* \in C_\alpha\left( \hat{\bSigma}_{\npilot} \right)$ is at least $1 - \alpha$.
    If $\bSigma^* \in C_\alpha\left( \hat{\bSigma}_{\npilot} \right)$, it follows that
    \begin{equation}
        \hat{\eta} = \max_{\bSigma \in C_\alpha\left( \hat{\bSigma}_{\npilot} \right)} \mathbb{E}_{\hat{\bSigma}_{\npilot}} \left[ \discrepancy \left(g\left(\hat{\btheta}(\bar{\by}), \hat{\bSigma}_{\npilot} \right), \bSigma \right) \right] \geq \mathbb{E}_{\hat{\bSigma}_{\npilot}} \left[ \discrepancy \left(g\left(\hat{\btheta}(\bar{\by}), \hat{\bSigma}_{\npilot} \right), \bSigma^* \right) \right] = \eta^*. \nonumber
    \end{equation}
    Therefore,
    \begin{equation}
        \mathbb{P} \left(\hat{\eta} \geq \eta^* \right) \geq \mathbb{P} \left( \bSigma^* \in C_\alpha\left( \hat{\bSigma}_N \right) \right) \geq 1 - \alpha. \nonumber
    \end{equation}
\end{proof}

\paragraph{Some intuition for \boldsymbol{$\alpha$}} Since $\alpha$ sets the miscoverage level of the confidence set for the true covariance, it controls an intuitive tradeoff involving robustness and optimality.
When $\alpha$ is small, the confidence set is large and thus DDMM produces an adjustment accounting for a large set of covariance matrices.
In turn, the smallest worst-case expected discrepancy is larger.
When $\alpha$ is large, the confidence set is small and DDMM produces an adjustment acccounting for a relatively small set of possible covariance matrices.
Since there is a tradeoff, it is reasonable to think that there is a \emph{best} $\alpha$ to choose when using DDMM.
Indeed, \Cref{ss:optimizing_alpha} describes our approach for selecting $\alpha$ that provides optimal performance relative to the sample covariance (correlation).

\begin{algorithm}[H]
    \caption{DDMM}
    \label{alg:ddmm}
    \begin{algorithmic}[1] 
    \REQUIRE  $\npilot$ pilot samples, $\by_1, \by_2, \dots, \by_N$. A specified miscoverage level for the confidence set, $\alpha \in (0, 1)$
    \ENSURE An adjusted sample covariance, $g\left(\hat{\btheta}; \hat{\bSigma}_N\right)$.

    \STATE Compute $\hat{\bSigma}_N$ using the pilot samples and compute a $1 - \alpha$ confidence set for $\bSigma$, $C_\alpha\left( \hat{\bSigma}_N \right)$.
    \STATE Compute $\hat{\btheta}$ solving the minimax optimization specified in \Cref{eq:dd_minimax_adjustment}.
    \STATE Use the obtained $\hat{\btheta}$ to compute the adjusted sample covariance, $g\left(\hat{\btheta}; \hat{\bSigma}_N\right)$. 
    \end{algorithmic}
\end{algorithm}

\subsection{Implementing DDMM} \label{sec:implementation}
We focus our implementation on the scenario outlined in \Cref{ss:suboptimality}.
Namely, we consider a single hifi model and single lofi model with known unit variances and thus the DDMM procedure only needs to adjust the sample correlation.

\subsubsection{A sigmoid adjustment function}
To fit our \emph{a priori} knowledge that $\rho \in (0, 1)$, we use a sigmoid adjustment function defined as follows:
\begin{equation}
    g(\btheta; \hat{\rho}) := \frac{\exp \left(\theta_0 \hat{\rho} + \theta_1 \right)}{1 + \exp \left(\theta_0 \hat{\rho} + \theta_1 \right)},
\end{equation}
where $\theta_0$ and $\theta_1$ are the first and second components of $\btheta \in \mathbb{R}^2$, respectively.
By construction, the sigmoid adjustment function ensures that any sample correlation $\hat{\rho} \in (-1, 1)$ gets mapped to $(0, 1)$.

\subsubsection{Computing the confidence interval} \label{ss:computing_cis}
Once the $\npilot$ pilot samples have been observed, we compute $\hat{\rho}$ according to \Cref{eq:sample_corr_formula} and compute $C_\alpha(\hat{\rho})$ using the following method.
$\alpha \in (0, 1)$ can be chosen either heuristically (as is often done in statistics and robust optimization) or computationally (see \Cref{ss:optimizing_alpha}).
Unlike the general exposition in \Cref{sec:dd_minimax}, we compute a confidence interval such that $C_\alpha(\hat{\rho}) \subset (-1, 1)$.

We seek an interval $[\rho_l(\hat{\rho}), \rho_u(\hat{\rho})]$ such that $\mathbb{P}(\rho \in [\rho_l(\hat{\rho}), \rho_u(\hat{\rho})]) \geq 1 - \alpha$ for any $\rho \in (-1, 1)$.
In the following exposition, we eliminate the $\hat{\rho}$ from the interval endpoint notation as it should be clear that these are random endpoints downstream of the random sample correlation.
Equivalently, we want endpoints such that
\begin{equation}
    \mathbb{P} \left(\rho_l \leq \rho \leq \rho_u \right) \geq 1 - \alpha, \; \; \rho \in (-1, 1).
\end{equation}
Since we know the sampling distribution of $\hat{\rho}$ both in closed-form (e.g., \cite{hotelling1953}) or by sampling the appropriate Wishart distribution, finding these endpoints can be framed as a root-finding procedure.
Namely, let $r \in (-1, 1)$ denote the realized sample correlation of the random variable, $\hat{\rho}$.
The lower interval endpoint is the smallest $\rho$ such that the probability that $\hat{\rho} > r$ is equal to $\alpha / 2$.
Formally, $\rho_l$ is the correlation value such that,
\begin{equation} \label{eq:ci_lep}
    \mathbb{P} \left(\hat{\rho} > r \mid \rho_l, \npilot \right) = \frac{\alpha}{2},
\end{equation}
where the above probability notation is the probability of the event $\{\hat{\rho} > r \}$ under the sampling distribution defined by the true correlation set to $\rho_l$ and the number of pilot samples set to $\npilot$.
Similarly, the upper endpoint is the largest $\rho_u$ such that,
\begin{equation} \label{eq:ci_uep}
    \mathbb{P} \left(\hat{\rho} < r \mid \rho_u, \npilot \right) = \frac{\alpha}{2}.
\end{equation}
The root-finding procedure to find these two endpoints simply adjust $\rho_l$ and $\rho_u$ until the desired probability is obtained.

Additional details on this procedure and others can be found in \cite{anderson2003, krishnamoorthy2006}.
Note, for the more general case in which a confidence set in $\mathcal{S}_{++}^{\nmodels}$ is desired, inverting a likelihood ratio test provides a clear path forward.
The probabilities in \Cref{eq:ci_lep} and \Cref{eq:ci_uep} can be computed numerically or via Monte Carlo using the methods described in \Cref{app:samp_corr_wishart}.
For the purpose of choosing an optimal $\alpha$ as described in \Cref{ss:optimizing_alpha}, it is advantageous to learn the interval endpoint surfaces over the $(\alpha, r)$ space, since solving the above root-finding problems can be computationally challenging.
Since the sample correlation density is smoothly varying as a function of both the confidence level ($\alpha$) and the true correlation ($\rho$), using a smoothing or interpolating spline to fit these surfaces produces accurate results.
Details of this approach can be found in \Cref{ss:cis_and_surrogates}.

\subsubsection{Solving the minimax optimization} \label{ss:minimax_solve}
Since solving minimax problems is generally challenging\footnote{Solving this problem using gradient-based optimization falls in the category of stochastic optimization and can be approached using stochastic gradient descent (SGD) \cite{nemirovski2009}. Using SGD-based approaches typically requires convex-concave assumptions on the objective to guarantee solution optimality, however there are more recent efforts to relax these assumptions \cite{jin2020, lin2020}.}, we pursue a discretized solution.
Since the expected discrepancy is neither convex in $\btheta$ nor concave in $\rho$, avoiding numerical (non)gradient-based optimization in favor of computational complexity is prudent.
Let $G_{\btheta} := \{\btheta \}_{i = 1}^{n_{\btheta}}$ denote a grid of $\btheta \in \Theta \subset \mathbb{R}^2$ and $G_\rho := \{ \rho_j \}_{j = 1}^{n_\rho}$ denote a grid of $\rho \in (0, 1)$.
Once $C_\alpha(\hat{\rho}) = [\rho_l, \rho_u]$ has been computed, if given a matrix containing the expected discrepancy values corresponding to the $(\btheta, \rho)$ settings with rows indexed over $G_{\btheta}$ and columns indexed over $G_\rho$, then we simply ignore the columns outside of $C_\alpha(\hat{\rho})$. We can then produce the DDMM solution by first maximizing over all remaining rows then minimizing over the remaining one-dimensional array.
Numerically, the discretization error can be arbitrarily minimized via the grid sizes.
This approach relies upon an expected discrepancy array over  $(\btheta, \rho)$, which can be cumbersome to compute at fine resolutions of $\btheta$ and $\rho$.
As such, in \Cref{ss:exp_discp_arr}, we propose an efficient procedure for computing this high-resolution array which involves first computing a low-resolution array, performing a tensor decomposition, using splines to fit the orthogonal basis functions, and up-scaling the matrix by re-composing the tensor at the desired resolutions.

\section{Computational and Empirical Results} \label{sec:comp_and_emp_results}
We present computational and empirical results to build intuition for DDMM’s operation on a sample level, demonstrate its superior performance under small pilot sample regimes, and support the robustness of its superior performance to assumption violations.
Computationally, we use the toy scenario in \Cref{ss:suboptimality} for which all assumptions hold and show that the DDMM-adjusted sample correlation yields a lower expected discrepancy than the typical sample correlation.
Empirically, we use the NASA EDL multi-model Monte Carlo dataset \cite{warner2021} to show that the theoretical performance superiority holds under assumption violations arising from real data.
We further show that the DDMM adjustment improves the expected variance reduction across all QoIs relative to the unadjusted sample correlation.

In line with \Cref{sec:evaluation} and \Cref{eq:dominating_estimator_sample_covariance}, we wish to show that the DDMM-adjusted sample covariance achieves a lower expected discrepancy than the unadjusted sample covariance across all possible true covariance matrices consistent with our knowledge.
To facilitate this exposition, we consider the \emph{Expected Discrepancy Difference} (EDD) and \emph{Expected Discrepancy Difference percentage} (\%EDD) where the latter is computed by dividing EDD by the unadjusted expected discrepancy.

\begin{definition}[Expected Discrepancy Difference (EDD)] \label{def:edd}
    Given a covariance estimator $h \in \mathcal{H}$, $\npilot$ pilot samples $\bar{\by}$, and $\bSigma \in \mathcal{S} \subset \mathcal{S}_{++}^{\nmodels}$, the EDD is defined as follows,
    \begin{equation} \label{eq:edd}
        EDD := \mathbb{E}_{\bar{\by}} \left[ \discrepancy(h(\bar{\by}), \bSigma) - \discrepancy \left(\hat{\bSigma}_{\npilot}, \bSigma \right) \right].
    \end{equation}
\end{definition}

If the DDMM estimator $h$ produces $\Delta(h, \bSigma) \leq 0$ across all $\bSigma \in \mathcal{S}$, then DDMM dominates the sample covariance.
In the following two sections, we consider using a single hifi and single lofi model and only consider covariance matrices via the correlation.
I.e., in \Cref{sec:bivariate_gaussian_results} we assume the individual model variances are known and equal to one, and in \Cref{sec:edl_results} we only adjust the sample correlation but leave the sample variances untouched when evaluating the (\%)EDD.
Both scenarios consider the (\%)EDD over $\rho \in (0, 1)$.

\subsection{Bivariate Gaussian benchmark} \label{sec:bivariate_gaussian_results}

Under the bivariate Gaussian assumption as described in \Cref{ss:suboptimality}, we demonstrate the performance superiority of DDMM over sample correlation via (\%)EDD.
The Gaussian assumption means the density of the sampling distribution of the sample correlation is known exactly \citep{hotelling1953} and thus the EDD can be computed over $\rho \in (0, 1)$ via numerical integration.
Finally, we assume the hifi and lofi model costs are $c_0 = 1$ and $c_1 = 0.1$ with a total budget $C = 100$.

\begin{figure}
    \centering
    \includegraphics[width=0.9\textwidth]{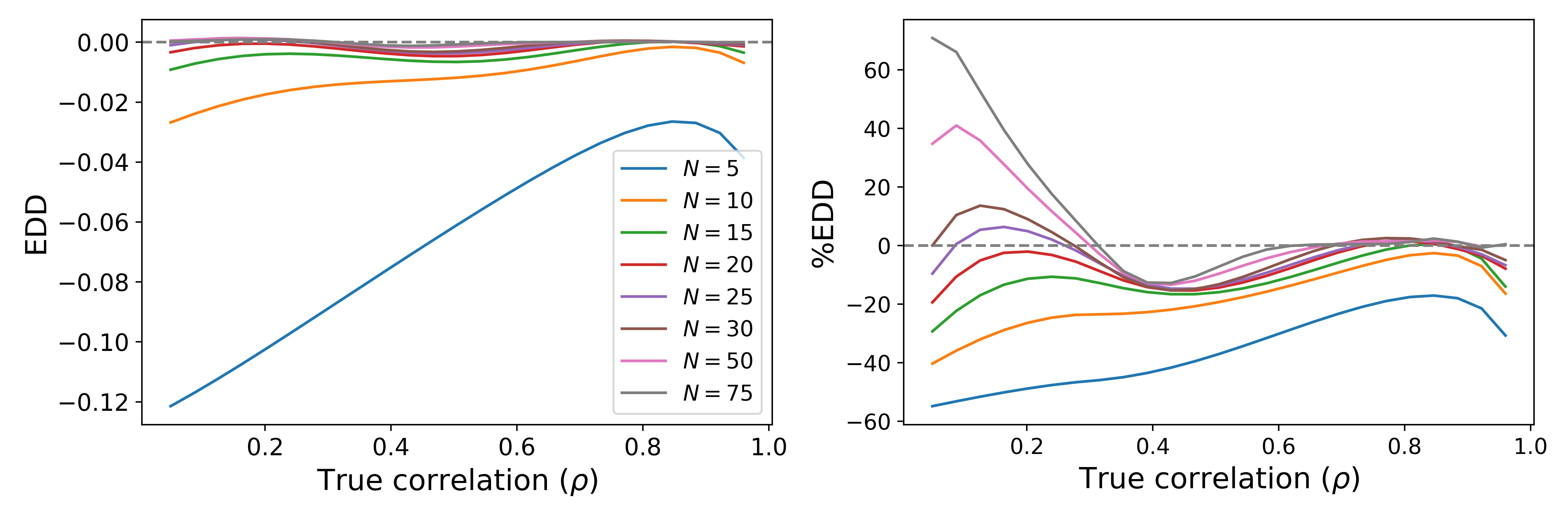}
    \caption{\textbf{(Left)} EDD and \textbf{(Right)} \%EDD across $\rho \in (0, 1)$ the bivariate Gaussian problem, showing DDMM's superior performance over the sample correlation for a collection of pilot sample sizes ($N$). DDMM dominates the sample correlation up to at least $N = 15$.}
    \label{fig:bi_gauss_edd}
\end{figure}

The result of performing this integration across $\rho \in (0, 1)$ is shown in \Cref{fig:bi_gauss_edd}.
The miscoverage level, $\alpha$, is set to $0.253$ as determined by the procedure detailed in \Cref{ss:optimizing_alpha}.
The left panel of \Cref{fig:bi_gauss_edd} shows the EDD across $\rho \in (0, 1)$ for a collection of pilot sample sizes ($\npilot$).
Up to at least $\npilot = 15$ pilot samples, DDMM dominates the sample correlation and is thus worth using in MFMC if one's budget only permits $\leq 15$ pilot samples.
This fact can most clearly be seen in the right panel of \Cref{fig:bi_gauss_edd} showing \%EDD.
Although (\%)EDD is not uniformly below $0$ for $\rho \in (0, 1)$ when $N > 15$, there are still considerable portions of the true correlation space where DDMM maintains an edge.
Thus, even if one can afford more than $\npilot = 15$ pilot samples, depending on what is known about $\rho$, it still may be sensible to use DDMM over the sample correlation.
Additionally, although these results are relevant to this particular example's cost ratio, \Cref{sec:edl_results} explores a scenario where the cost ratio is nearly an order of magnitude smaller and yields similar conclusions.

To summarize the improvement information, \Cref{fig:average_gaussian_improvement} shows the average (\%)EDD values across a range of pilot sample sizes.
The average is taken over the same grid of $\rho$ values producing the plots in \Cref{fig:bi_gauss_edd}.
For pilot sample sizes less than the largest tested pilot sample size ($\npilot = 75$), \%EDD shows that DDMM outperforms the sample correlation on average over $\rho$.
As one might expect, the average improvement diminishes as the pilot sample size grows in large part because the sample correlation distribution becomes much more concentrated around the true correlation.
\begin{figure}
    \centering
    \includegraphics[width=0.75\textwidth]{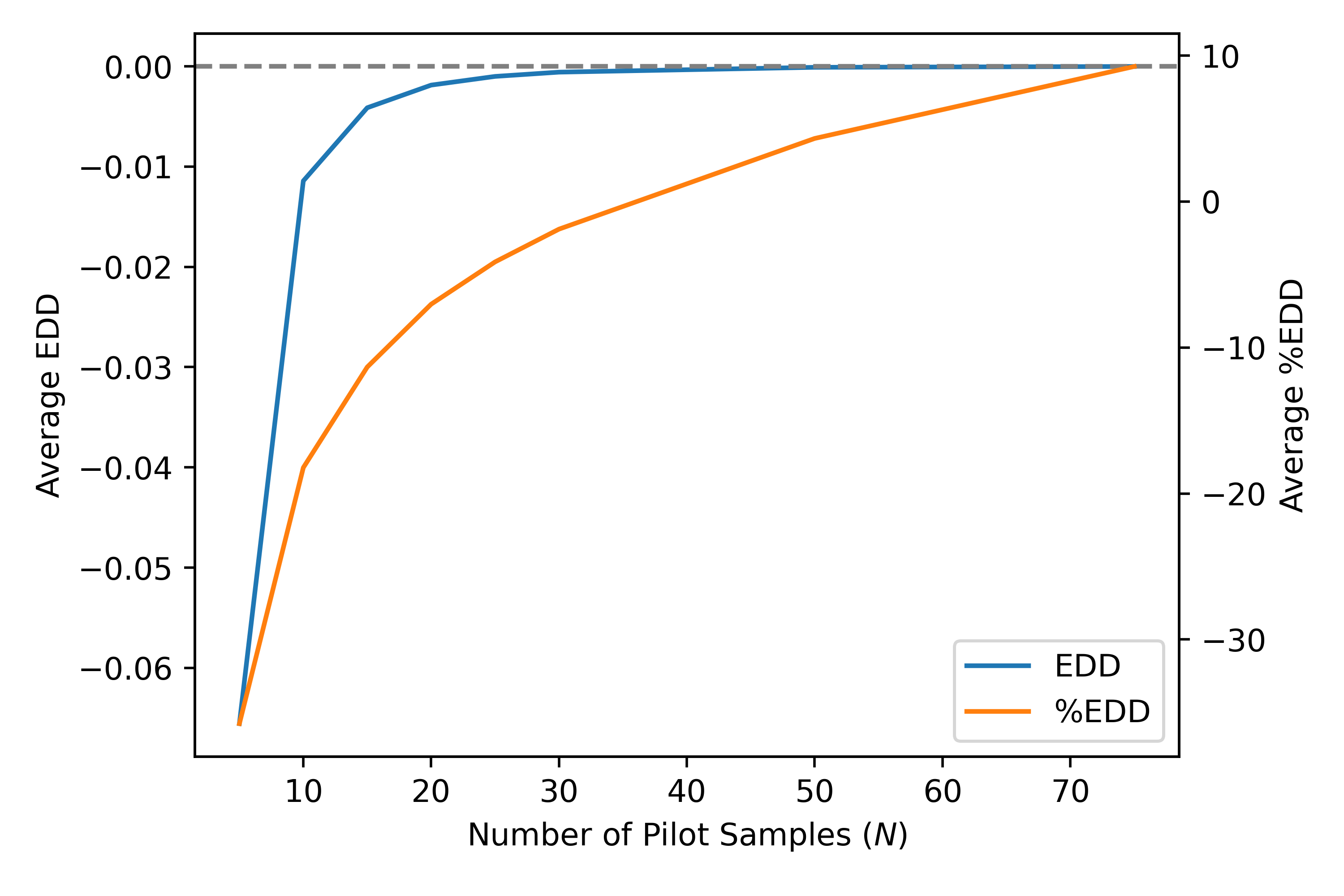}
    \caption{Average (\%)EDD values (over $\rho$) across a range of pilot sample sizes for the bivariate Gaussian problem. For $\npilot < 75$, \%EDD shows that the DDMM procedure procedures an improvement over the sample correlation, on average.}
    \label{fig:average_gaussian_improvement}
\end{figure}

Aside from the Gaussian and covariance assumptions, the above results are downstream of the MFMC configuration (i.e., budget and model costs).
In addition to generating the results with the MFMC budget set to $C = 100$, we generated results for $C = 50$ and $C = 1000$.
These budgets yielded the same results implying that the performance of DDMM is independent to the budget size.
This result is consistent with \Cref{lemma:budget_invariance}.
Thus, the sensibility of including DDMM in an MFMC pipeline is only determined by the absolute number of pilot samples one can afford.
To probe the sensitivity of these results to different model costs, we refer to the results on the EDL application in \Cref{sec:edl_results}.

\subsection{Entry, descent, and landing (EDL) application} \label{sec:edl_results}
The EDL dataset (as explored in \cite{warner2021}) provides a realistic test of multi-model methods and here we use it to evaluate DDMM's performance under a different configuration than that of \Cref{sec:bivariate_gaussian_results} and under violated method assumptions (i.e., non-Gaussian model outputs and unknown model standard deviations).
The data represent trajectory simulations of a sounding rocket and are the output of 75 random inputs, including atmospheric and aerodynamic properties.
The dataset has a hifi model and three lofi model options.
We choose the ``coarse time step'' lofi model option since its cost relative to the hifi model is $0.013$, nearly a factor of $10$ smaller than the relative lofi model cost in the bivariate Gaussian scenario.
Following \citep{warner2021}, we use a time budget of $10^4$ seconds, translating to $45.66$ hifi model runs.
Again, we precompute $\alpha$ to $0.285$ using the procedure detailed in \Cref{ss:optimizing_alpha}.
Although we show global results across all $16$ quantities of interest (QoIs) (see \Cref{fig:edl_variance_reduction}), we focus on Terminal Velocity and Maximum Acceleration (``vel-term'' and ``accel-max'', respectively, in \Cref{fig:edl_variance_reduction}) to showcase the method's operation and sensitivity to assumption violations.
Other QoIs represent features of the trajectory like landing location.
Similar to the bivariate Gaussian scenario, we precompute an optimal $\alpha$ using \Cref{ss:optimizing_alpha}.
Across all QoIs, \Cref{fig:edl_variance_reduction} shows that DDMM produces superior expected variance reduction compared to that of the sample correlation.

\begin{table}[h]
\centering
\renewcommand{\arraystretch}{1.2}
\begin{tabular}{|>{\columncolor{gray!20}\bfseries}l|c|c|c|c|}
    \hline
    \rowcolor{gray!35}
    Metric & Avg. VRR \% Improvement & Avg. EDD & Avg. \%EDD & Avg. MSE \% Chg. \\
    \hline
    Value  & $8.60$\%     & $-0.058$      & $-28.71$\%      & $-7.75$\%      \\
    \hline
    \end{tabular}
    \caption{Summary metrics for the EDL problem, which all show DDMM performance improvement over the sample covariance. The average is taken over the $16$ QoIs considered and shows improvement in the VRR percent improvement (where the percent improvement is calculated by the expected adjusted VRR against the expected unadjusted VRR), improvement in the \%EDD, and improvement in the estimator MSE percent change (where the percent change is the expected adjusted MSE against the expected unadjusted MSE).}
    \label{table:edl_summary}
\end{table}

In addition to the (\%)EDD metrics used in the previous section, we evaluate DDMM using expected mean squared error (MSE) and expected variance reduction where the expectation for both metrics is taken with respect to the randomly drawn pilot samples.
For all metrics, DDMM provides improvments over using just the sample covariance alone.
For each set of $\npilot = 5$ pilot samples, we use DDMM to compute an adjusted sample covariance matrix, solve the MFMC optimization, and evaluate the MFMC variance under the known true covariance (i.e., the covariance matrix computed over all samples).
Since the MFMC estimator is unbiased, this estimator variance under the true covariance is equal to the MSE.
We do the same procedure for the unadjusted covariance matrix.
To sample from each estimator's MSE distribution, we draw $\npilot = 5$ pilot samples $M = 2000$ times from the EDL dataset where the pilot samples are drawn uniformly at random without replacement.
From these samples, we empirically estimate the expectations by averaging over the samples for each of the metrics.
For the variance reduction, we divide the expected variance (MSE) by the MC variance of using the full budget on hifi samples.

\begin{figure}
    \centering
    \includegraphics[width=0.75\textwidth]{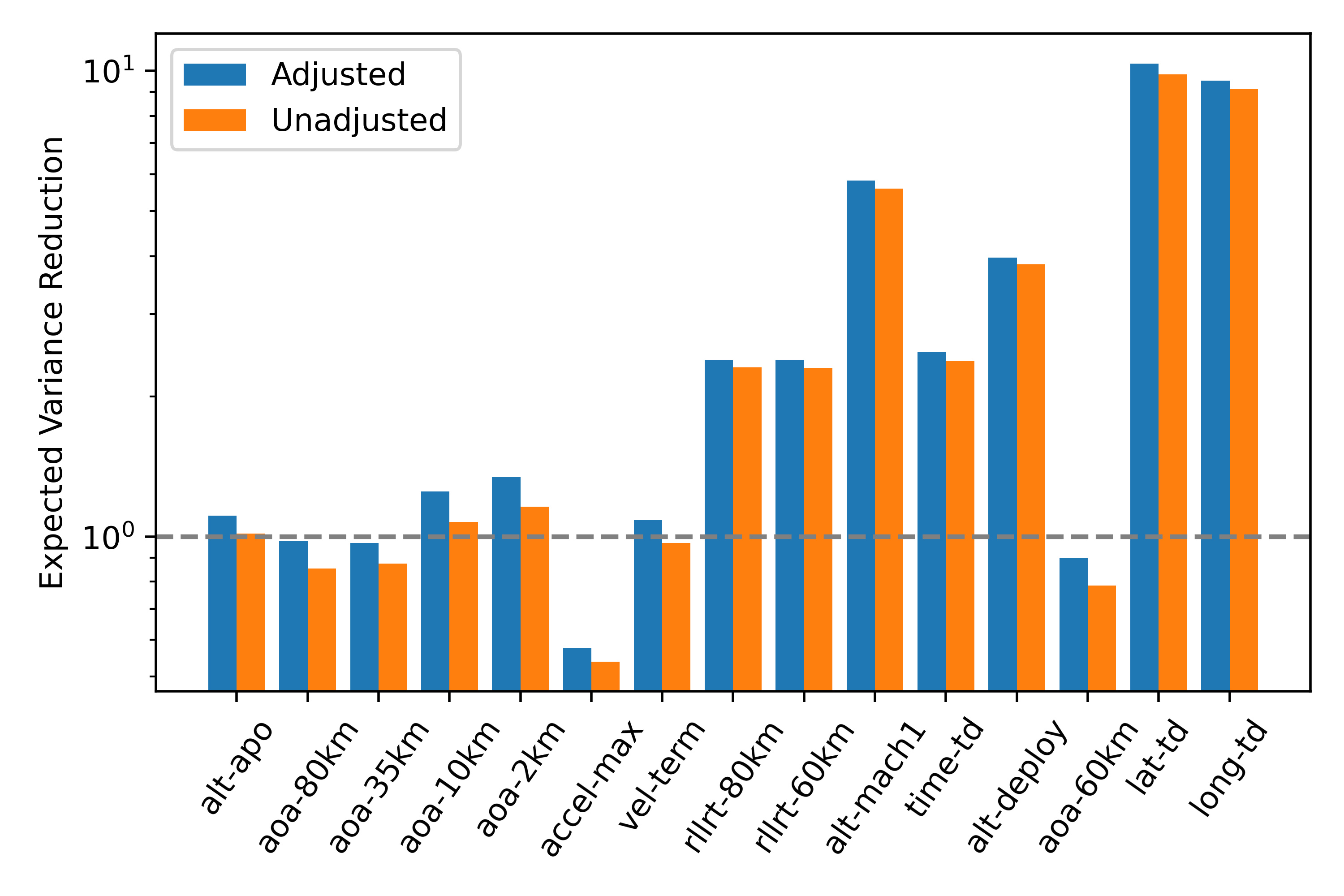}
    \caption{Expected variance reduction for all EDL QoIs where the expectation is respect to the randomness in pilot samples. Across all QoIs, the expected variance reduction is greater when using the adjustment. Additionally, for Terminal Velocity (``vel-term''), adjusting via DDMM is the difference between achieving a variance reduction on average and not.}
    \label{fig:edl_variance_reduction}
\end{figure}

\Cref{fig:edl_edd} shows both theoretical and empirical (\%)EDD values and mean MSE percent changes (from unadjusted to adjusted) across the EDL dataset QoIs.
Like the bivariate Gaussian scenario, across the range of true correlation values in the EDL dataset, we observe DDMM's superior performance via all (\%)EDD values and all mean MSE percent changes falling below zero.
We further observe close agreement between the theoretical values (computed via numerical integration under the bivariate Gaussian assumption) and empirical values (computed by sampling the EDL dataset), indicating that DDMM's benefits are robust to violations in the method's assumptions.
\Cref{table:edl_summary} summarizes the performance improvements observed using the DDMM adjustment versus using just the sample covariance where the average values are taken over the considered QoIs.

\begin{figure}
    \centering
    \includegraphics[width=0.95\textwidth]{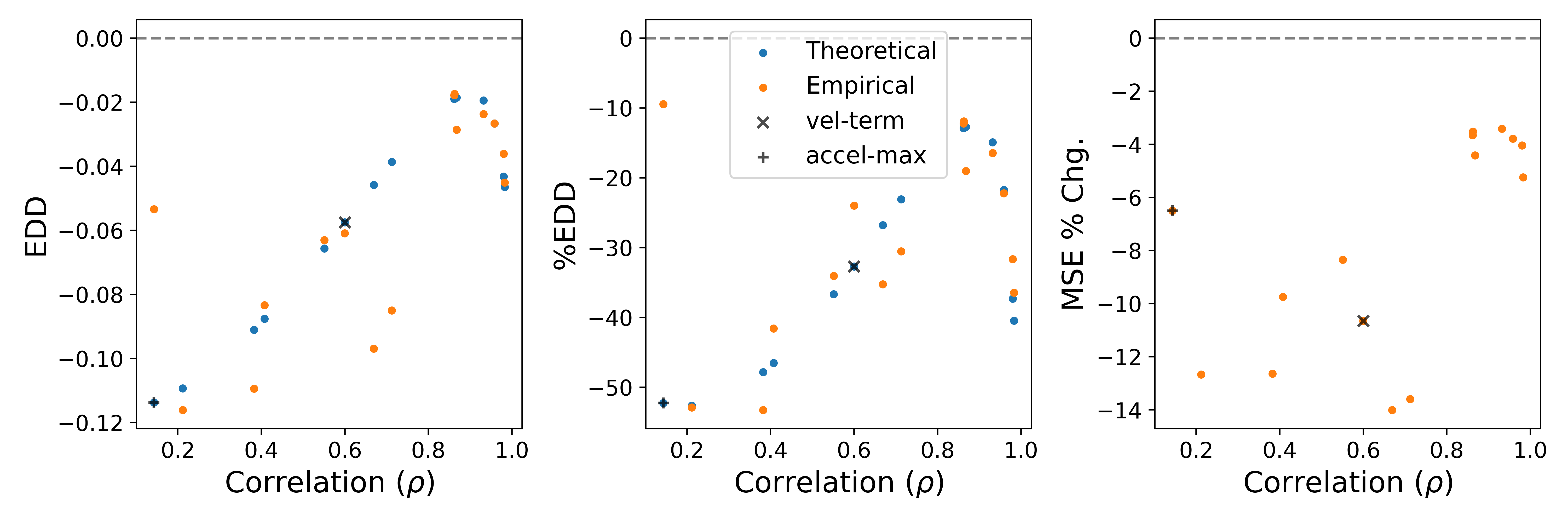}
    \caption{\textbf{(Left)} EDD and \textbf{(Center)} \%EDD across $\rho \in (0, 1)$ show theoretical and empirical values from the EDL data. \textbf{(Right)} similarly shows mean MSE percent changes (unadjusted to adjusted) across the dataset QoIs. Each point corresponds to a QoI. These three plots show that DDMM's superior performance holds across the dataset QoIs with respect to both (\%)EDD and mean MSE metrics. The first two plots further indicate that there is close agreement between the theoretical and empirical values despite assumption violations.}
    \label{fig:edl_edd}
\end{figure}

In all three of the plots in \Cref{fig:edl_edd}, we observe an ``off trend'' QoI at the lowest observed true correlation (the Maximum Acceleration QoI) where the theoretical and empirical values do not match.
Although the scatter plot showing the lofi model outputs against the hifi model outputs clearly shows that the joint distribution is not Gaussian (center panel of \Cref{fig:term_vel_max_acc}), the Gaussian assumption primarily impacts these metrics as it relates to the sampling distribution of the sample correlation (right panel of \Cref{fig:term_vel_max_acc}).
Interestingly, other non-Gaussian QoIs do not seem to have the same problem, highlighting that the sampling distribution of the sample correlations drives the theoretical and empirical mismatch.
For example, the lofi against hifi model outputs for the Terminal Velocity QoI exhibit similarly non-Gaussian properties (left panel of \Cref{fig:term_vel_max_acc}), but the difference between theoretical and empirical EDD is much smaller because the theoretical sampling distribution of the sample correlations more closely matches the empirical sampling distribution (right panel of \Cref{fig:term_vel_max_acc}).
Even under this mismatch, however, the DDMM procedure produces significant empirical performance improvements, albeit at lower levels than theoretically projected.

\begin{figure}
    \centering
    \includegraphics[width=0.95\textwidth]{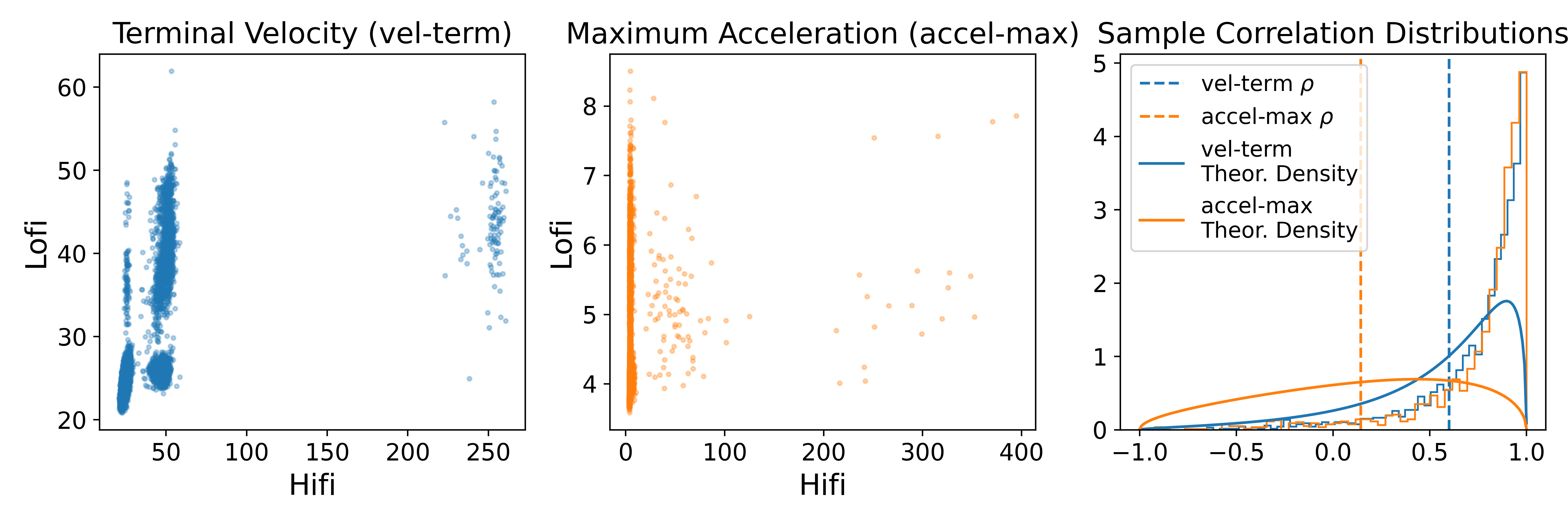}
    \caption{\textbf{(Left)} Lofi versus Hifi outputs for the Terminal Velocity QoI. \textbf{(Center)} Lofi versus Hifi output for the Maximum Acceleration QoI. \textbf{(Right)} Sampling correlation sampling distributions along with their theoretical densities (under the bivariate Gaussian assumption) for both QoIs. Although both Terminal Velocity and Maximum Acceleration are non-Gaussian in a similar qualitative way, their resulting sample correlation distributions differ dramatically with respect to their theoretical densities. The stark distribution difference for Maximum Acceleration is a likely driver for the EDD theoretical/empirical difference observed in \Cref{fig:edl_edd}.}
    \label{fig:term_vel_max_acc}
\end{figure}

Finally, we focus on Terminal Velocity to provide additional intuition into how the method is adjusting the correlation coefficient (see \Cref{fig:term_vel_diagnostics}).
In the left panel of \Cref{fig:term_vel_diagnostics}, we see the adjusted sample correlations plotted against the unadjusted sample correlations, showing that the adjustment is ensuring our positive correlation assumption is enforced by adjusting correlations upward for small or negative sample correlations.
The adjustment also (very slightly) adjusts the correlations downward when $\hat{\rho} \approx 0.5$, and then approximately keeps the sample correlation the same for larger observed values.
In the center panel of \Cref{fig:term_vel_diagnostics}, a quantile-quantile plot of adjusted versus unadjusted MSE values shows that the adjustment is effectively reducing a long right tail in the MSE distribution of the unadjusted procedure, preventing any very large MSEs from propagating to the estimator.
The right panel of \Cref{fig:term_vel_diagnostics} shows that this tail is reduced both for sample correlations substantialy smaller and larger than the true correlation value, highlighting that the procedure is robust to these particularly noisy sample correlations.

\begin{figure}
    \centering
    \includegraphics[width=0.95\textwidth]{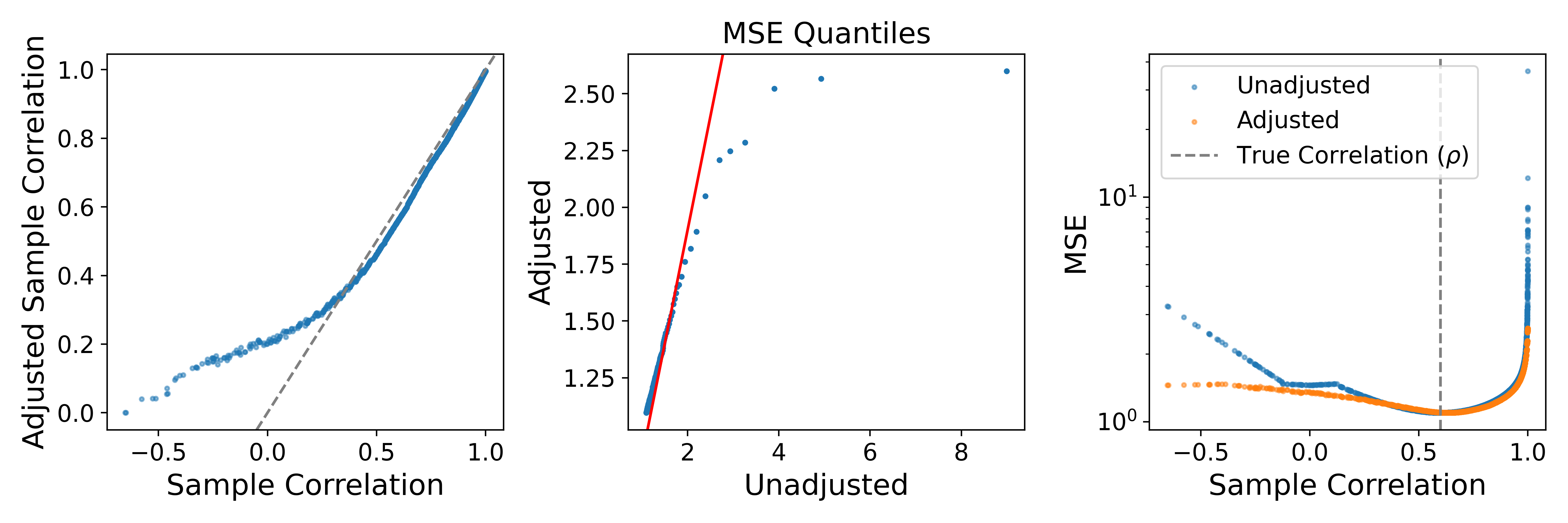}
    \caption{DDMM behavior for the Terminal Velocity QoI. \textbf{(Left)} Adjusted sample correlation versus sample correlation shows how DDMM is adjusting realizations of the sample correlation. \textbf{(Center)} Adjusted MSE quantiles versus unadjusted MSE quantiles shows that the adjustment is reducing a long tail of estimator variance relative to the unadjusted sample correlation. \textbf{(Right)} Realizations of the MSE as a function of the observed sample correlation shows that the adjustment is effectively controlling MSE for sample correlations away from the true correlation.}
    \label{fig:term_vel_diagnostics}
\end{figure}

\section{Summary}\label{sec:conclusions}
For multi-fidelity Monte Carlo procedures, methods are sensitive to pilot sample variability (especially at low sample sizes).
We have proposed a mathematical framework based on a novel discrepancy metric with which this sensitivity can be characterized and quantified, and we have investigated the robustness of multi-fidelity estimators through this lens.
Our empirical investigation into the robustness of these estimators elucidated that there is a clear trade-off between estimator expressivity and estimator robustness, with more generalized estimators suffering both in terms of performance and sensitivity under limited pilot samples while MFMC in particular performs well in these conditions.
We also showed, via a global sensitivity analysis of the MFMC estimator discrepancy, that the uncertainty in the sample correlation coefficient has the greatest impact on estimator variability.
To improve estimator robustness to pilot sample variability, we proposed the data-driven minimax procedure which minimizes the worst-case expected discrepancy over a set of plausible covariance matrices.
Computationally, we demonstrated that the DDMM estimator renders the standard sample covariance inadmissible, in that it produces lower expected discrepancy across all correlation settings $\rho \in (0, 1)$ in the described two-model scenario.
We also observed that the DDMM adjustment uniformly improves performance up to $\npilot = 15$ pilot samples and shows potential gains even for larger pilot sample sizes.
Empirically, using the NASA EDL multi-fidelity Monte Carlo dataset, we showed that DDMM's superior performance holds even when its theoretical assumptions are violated.
Furthermore, the theoretical expected discrepancy difference values were good approximations of the empirical ones obtained for from EDL data for nearly all QoIs in this dataset. 
Finally, we showed that, across all QoIs, the variance reduction was improved when using the DDMM correction over just the pilot sample covariance.

We hope the tools, methods, and results in this paper enable the multi-fidelity Monte Carlo community to better quantify and handle variability due to pilot sampling.
With its current implementation, an optimal $\alpha$ (see \Cref{ss:optimizing_alpha}) and DDMM-adjusted sample correlation can be computed in less than 15 minutes using a personal computer (see \Cref{ss:choose_data_sizes} for the details of the machine used to generate the results of this paper).
As such, adding this adjustment step to an MFMC pipeline has relatively little computational cost while substantially improving estimator robustness.
There are many possible generalizations of this work including different discrepancy functions, different adjustment functions, different statistical notions of estimator optimality, and extensions to multiple lofi model scenarios, which we leave for future work.
Additionally, even though the EDL results show that the improvements can be robust to the Gaussian model output assumption, it would be useful to find ways to avoid the Gaussian assumption altogether. 

\appendix
\section{Sample correlation density and Wishart Distributions} \label{app:samp_corr_wishart}
The ideas in this paper are largely predicated upon our ability to compute expectations of the form,
\begin{equation} \label{eq:exp_corr}
    \mathbb{E}_{\hat{\rho}} \left[f(\hat{\rho}) \right] = \int_{-1}^1 f(r) d P(r),
\end{equation}
for a univariate function $f$, where the expectation is taken with respect to the distribution of the sample correlation.
We also compute probabilities via integration for confidence interval computation as described in \Cref{ss:computing_cis}.
There are few scenarios in which this distribution is known in closed form, one of which is the bivariate Gaussian scenario used throughout this paper.
As such, we have two options for computing \Cref{eq:exp_corr}: numerical integration of the product of the function of interest and the sample correlation density, or Monte Carlo sampling.

To use the former approach, we leverage the density as given in \cite{hotelling1953},
\begin{equation}
    f(r \mid \rho, \npilot) =  \frac{(1 - \rho^2)^{\nu / 2} (1 - r^2)^{\frac{\nu - 2}{2}}}{\mathcal{B}(1/2, \nu/2)} \;_2 F_1 (1/2, 1/2; (\nu + 1) / 2; \rho r),
\end{equation}
where $\mathcal{B}(\cdot, \cdot)$ denotes the Beta function, $\;_2F_1(\cdot, \cdot; \cdot; \cdot)$ denotes the Gauss hypergeometric function, and $\nu = \npilot - 2$.
When $f$ is set to the discrepancy function with a fixed $\btheta$, $\discrepancy(g(\btheta; r), \rho)$, the expectation can be computed using numerical quadrature or Gauss-Legendre quadrature.
The latter happens to be an excellent option in this case since the integral is defined over $[-1, 1]$ and the quadrature can be written in a vectorized summation form for speed.

To use the latter approach, under the bivariate Gaussian assumption we can leverage the Wishart distribution as described by \Cref{eq:wishart_sampling} or slightly more circuitously sample directly from a bivariate Gaussian distribution and compute the sample correlation.
There is a slight advantage to directly using the Wishart distribution, which is that one can sample a realization, $\hat{\bSigma}$, and compute the sample correlation using fewer steps,
\begin{equation}
    \hat{\rho} = \frac{\hat{\Sigma}_{01}}{\sqrt{\hat{\Sigma}_{00} \hat{\Sigma}_{11}}}.
\end{equation}
This shortcut relative to directly sampling from the bivariate Gaussian means that the operation can be easily vectorized, making large sample sizes easy to handle.

Of course, \Cref{eq:exp_corr} is general and does not rely upon any Gaussian assumption.
Given data from some arbitrary distribution, it could be possible to use a subsampling procedure to estimate this expectation as we did in \Cref{sec:edl_results}.
We emphasize that this expectation is on the sampling distribution of the sample correlation which adds a layer of complexity in computing such a value.

\section{Exact MFMC solutions using Pool Adjacent Violators Algorithm (PAVA)}\label{sec:pava}

The analytical solutions to the MFMC estimator hyperparameter problem from \Cref{ss:suboptimality} are only applicable under certain model relationships \cite{peherstorfer2016}.
This can present a significant issue when using non-hierarchical model ensembles or noisy sample correlations since the conditions for these analytical solutions may be violated.
In these cases, users often eschew MFMC for other estimators or rely on gradient-based optimization algorithms to approximately solve the sample allocation problem, increasing the computational burden significantly.
While this computational burden may be small for single estimation tasks where the sample allocation problem only must be solved once, the cost can become prohibitive when evaluating the expected discrepancy metric since the sample allocation problem must be solved many times to numerically estimate the outer expectation.
Here, we introduce an isotonic optimization algorithm called Pool Adjacent Violators Algorithm (PAVA) \cite{busing2022} which provides exact solutions to the (relaxed) sample allocation problem with $\cO(\nmodels-1)$ complexity, alleviating the computational burden of using gradient-based optimization for the MFMC sample allocation problem. 

Suppose we have many ($\nmodels$) computational models, $\cM_i: \mathbb{R}^2 \to \mathbb{R}$ for $i = 0, \dots, \nmodels-1$.
As before, each model $\cM_i$ is associated with a computational cost $c_i$, and we must satisfy the total compute budget constraint $\sum_{i=0}^{\nmodels-1} n_i c_i \leq C$.
As part of the MFMC sample allocation constraints, the sample sets must be strictly nested, enforcing the monotonicity constraint $n_0 \leq n_1 \leq \dots \leq n_{\nmodels-1}$.

The general MFMC estimator of the high-fidelity expectation is defined as:
\begin{equation} \label{eq:gen_estimator}
    \hat{y}(\boldsymbol{\mfmcweight}) := \hat{y}_0 + \sum_{i=1}^{\nmodels-1} \mfmcweight_i \left( \hat{y}_{i+} - \hat{y}_{i-} \right),
\end{equation}
where $\hat{y}_0 = n_0^{-1}\sum_j^{n_0}\cM_0(z^{(j)})$ is the hifi MC estimator, $\hat{y}_{i+} = n_{i-1}^{-1}\sum_j^{n_{i-1}}\cM_i(z^{(j)})$ is the lofi MC estimator using the nested $n_{i-1}$ samples, $\hat{y}_{i-} = n_i^{-1}\sum_j^{n_i}\cM_i(z^{(j)})$ is the lofi MC estimator using the augmented set of $n_i$ samples, and $\boldsymbol{\mfmcweight}$ are the control variate weights.

Defining the model-output covariance matrix as $\bSigma$ and its associated standard deviations as $\sigma_i$ for $i=0,\ldots,\nmodels-1$ and pairwise correlations as $\rho_{ij}$ for $i\neq j$, the variance of the MFMC estimator is given by:
\begin{equation} \label{eq:gen_est_variance}
    \estvar(\params;\bSigma) = \frac{\sigma_0^2}{n_0} + \sum_{i=1}^{\nmodels-1} \left(\frac{1}{n_{i-1}} - \frac{1}{n_i} \right) (\mfmcweight_i^2 \sigma_i^2 - 2 \mfmcweight_i \rho_{0,i} \sigma_0 \sigma_i),
\end{equation}
where $\sigma_i^2$ is the variance of model $\cM_i$, $\rho_{0,i}$ is the correlation coefficient between the hifi model and the $i$-th lofi model, and $\params = \begin{pmatrix} n_0 & \dots & n_{\nmodels-1} & \mfmcweight_1 & \dots & \mfmcweight_{\nmodels-1} \end{pmatrix}^T$.
Setting the partial derivatives with respect to $\mfmcweight_i$ to zero yields the optimal weights $\mfmcweight_i^* = \rho_{0,i} \sigma_0 \sigma_i^{-1}$.
Substituting $\mfmcweight_i^*$ back into \Cref{eq:gen_est_variance} and regrouping the terms by $1/n_i$, the optimal variance simplifies to:
\begin{equation} \label{eq:s_variance}
    \estvar(\params^*;\Sigma) = \sigma_0^2 \sum_{i=0}^{\nmodels-1} \frac{S_i}{n_i},
\end{equation}
where the variance reduction contributions $S_i$ are defined as:
\begin{align}
    S_0 &= 1 - \rho_{0,1}^2 \nonumber \\
    S_i &= \rho_{0,i}^2 - \rho_{0,i+1}^2 \quad \text{for } 0 < i < \nmodels-1 \label{eq:s_terms} \\
    S_{\nmodels-1} &= \rho_{0,\nmodels-1}^2. \nonumber
\end{align}

The sample allocation problem is then defined as:
\begin{align}
    \underset{\mathbf{n}}{\min} &\quad \sum_{i=0}^{\nmodels-1} \frac{S_i}{n_i} \label{opt:isotonic_variance} \\
    \text{subject to} &\quad n_i - n_{i+1} \leq 0 \quad \forall i \in \{0, \dots, \nmodels-1\} \nonumber \\
    &\quad \sum_{i=0}^{\nmodels-1} c_i n_i = C. \nonumber
\end{align}
Following \cite{peherstorfer2016} and applying the notation from \Cref{eq:s_terms}, if the constraints
\begin{align}\label{eqn:mfmc_constraints}
	\frac{c_{i-1}}{c_i} > \frac{S_{i-1}}{S_i}
\end{align}
are met, then the closed-form global minimum for the sample allocation is defined by a ratio vector $\mathbf{r}^*$, where its components for $i = 0, \dots, \nmodels-1$ are:
\begin{equation} \label{eq:optimal_ratios}
    r_i^* =  \sqrt{\frac{c_0 S_i}{c_i S_0}}.
\end{equation}
(Note that by definition, $r_0^* = 1$).
The optimal number of high-fidelity model evaluations ($n_0^*$) is found by distributing the total computational budget $C$ according to these ratios:
\begin{equation} \label{eq:optimal_n0}
    n_0^* = \frac{C}{\sum_{j=0}^{\nmodels-1} c_j r_j^*}.
\end{equation}
Finally, the optimal sample allocations for the lower-fidelity models ($i = 1, \dots, \nmodels-1$) are determined by scaling the high-fidelity sample size:
\begin{equation} \label{eq:optimal_ni}
    n_i^* = n_0^* r_i^*.
\end{equation}
However, when the inequalities in \Cref{eqn:mfmc_constraints} are not satisfied, no such closed-form solution is available.

\subsection{Isotonic Optimization via PAVA}
Due to the strict ordering constraints of the MFMC sample allocation, $n_0 \leq n_1 \leq \ldots \leq n_{\nmodels-1}$, Optimization \ref{opt:isotonic_variance} is a classical isotonic optimization problem.
To solve this problem when the original MFMC constraints from \Cref{eqn:mfmc_constraints} are not satisifed, we utilize the Pool Adjacent Violators Algorithm (PAVA), which is a gold-standard algorithm for isotonic optimization that mathematically satisfies the KKT conditions exactly for convex functions.
PAVA partitions the models into contiguous blocks to enforce the monotonicity constraints.
By tracking the variance-to-cost ratio $S_i / c_i$, the unconstrained allocation dictates $n_i \propto \sqrt{S_i / c_i}$.
Whenever a monotonicity violation occurs (i.e., $R_{i-1} > R_i$), PAVA resolves it by merging the adjacent models into a single pooled block, performing a weighted average of the variance reduction contribution and costs within that block, and continuing to do so until the monotonicity constraints are met.
For the bifidelity case where $\nmodels=2$, this procedure amounts to setting $n_0=n_1=\frac{C}{c_0+c_1}$.

The resulting algorithm is summarized via the following steps:
\begin{itemize}
    \item \textbf{Initialization:} For each model $i = 0, \dots, \nmodels-1$, compute the variance-to-cost ratio $R_i = S_i / c_i$. If not already done so, order the models according to their correlation to the hifi model. Initially, treat each model as its own independent block.
    \item \textbf{Violation Detection:} Evaluate the sequence of blocks for monotonicity violations in $R$. Because the unconstrained optimal sample allocation dictates $n_i \propto \sqrt{R_i}$, a nested sampling violation occurs if a higher-fidelity block requires more samples than its adjacent lower-fidelity block, i.e., $R_{i-1} > R_i$.
    \item \textbf{Pooling:} If a violation is detected, merge the adjacent violating blocks into a single pooled block. The new block is assigned an aggregated variance reduction $S_{\text{pool}} = \sum S_i$ and an aggregated cost $c_{\text{pool}} = \sum c_i$ from its constituent models, yielding an updated pooled ratio $R_{\text{pool}} = S_{\text{pool}} / c_{\text{pool}}$.
    \item \textbf{Iteration:} Continue evaluating and merging adjacent blocks using a stack mechanism until the sequence of ratios across all remaining blocks is strictly monotonically non-decreasing.
    \item \textbf{Sample Allocation:} For each finalized block, determine the pooled sample size $n_{\text{block}} \propto \sqrt{R_{\text{block}}}$. Scale these values uniformly by a single Lagrange multiplier to exactly satisfy the total computational budget $C$, and assign $n_i = n_{\text{block}}$ to all individual models contained within that block.
\end{itemize}
Since PAVA operates in linear time over a stack mechanism, it operates in strictly linear time $\mathcal{O}(\nmodels)$.
We refer the reader to \cite{busing2022, jordan2020} for further details.
In our tests, PAVA can provide orders-of-magnitude speed-ups over gradient-based optimization protocols for modest ($2-30$) model ensemble sizes and benefits from being free from any optimization parameters.

\section{Empirical robustness of multi-fidelity estimators under unordered (random) models} \label{sec:lkj}

We repeat the empirical study from \Cref{sec:robustness}, this time under random model sets. To do so, we replicate the setup from Section 7 of \citep{bomarito2022} with $\nmodels=4$ and some small tweaks to the exact settings.
In this setup, we generate random model scenarios by sampling an oracle correlation matrix from the Lewandowski-Kurowicka-Joe (LKJ) distribution, which can be interpreted as a uniform distribution over valid correlation matrices when its shape parameter is set to 1.
The model variances are generated randomly by setting the hifi variance to 1, then independently sampling the remaining lofi variances from $\mathcal{U}(0.5,1.5)$.
Similarly, the hifi model cost is set to $1\%$ of the total budget while the remaining lofi log-costs ratios are independently sampled from $\mathcal{U}(-4,0)$.
To test each estimator's performance under pilot sampling variability, we once again draw a sample covariance matrix from the associated Wishart distribution, construct the associated estimators with hyperparameters set according to that sample covariance matrix, and evaluate each estimator's true performance under the oracle (LKJ-drawn) covariance.
We repeat this test for 30 trials across a variety of sample sizes.

The true estimator variances are shown in \Cref{fig:lkj_true_vars}, while the expected discrepancies are shown in \Cref{fig:lkj_discrep}.
The results are similar to those under ordered models, but with additional benefit to more general estimators in terms of true estimator variance, especially under large pilot sample sizes.
MFMC and WRDIFF still exhibit superior robustness both in terms of absolute and relative performance, albeit with the more general estimators overtaking in terms of true estimator variance at fewer pilot samples than in the ordered test.
By generating oracle covariance matrices from the LKJ distribution and decoupling the models costs from the model correlations, the model scenarios generated in this test are more amenable to general estimators.
Whereas MFMC and WRDIFF enforce model ordering via their sample allocation constraints, they can produce suboptimal estimators if a very cheap model is far more correlated than an expensive model.
Of note, the LKJ distribution can produce unrealistic model correlations quite often, with negative correlations occurring as often as positive correlations, for example. 
Nonetheless, this test implies that when one has a model ensemble that does not follow a consistent cost/accuracy hierarchy (i.e., with cheap models that may be far better correlated than expensive models), the robustness benefits of MFMC and WRDIFF may not outweigh the expressivity benefits of the other estimators when $\npilot>10$.

\begin{figure}[ht!]
        \centering
        \includegraphics[width=0.75\textwidth]{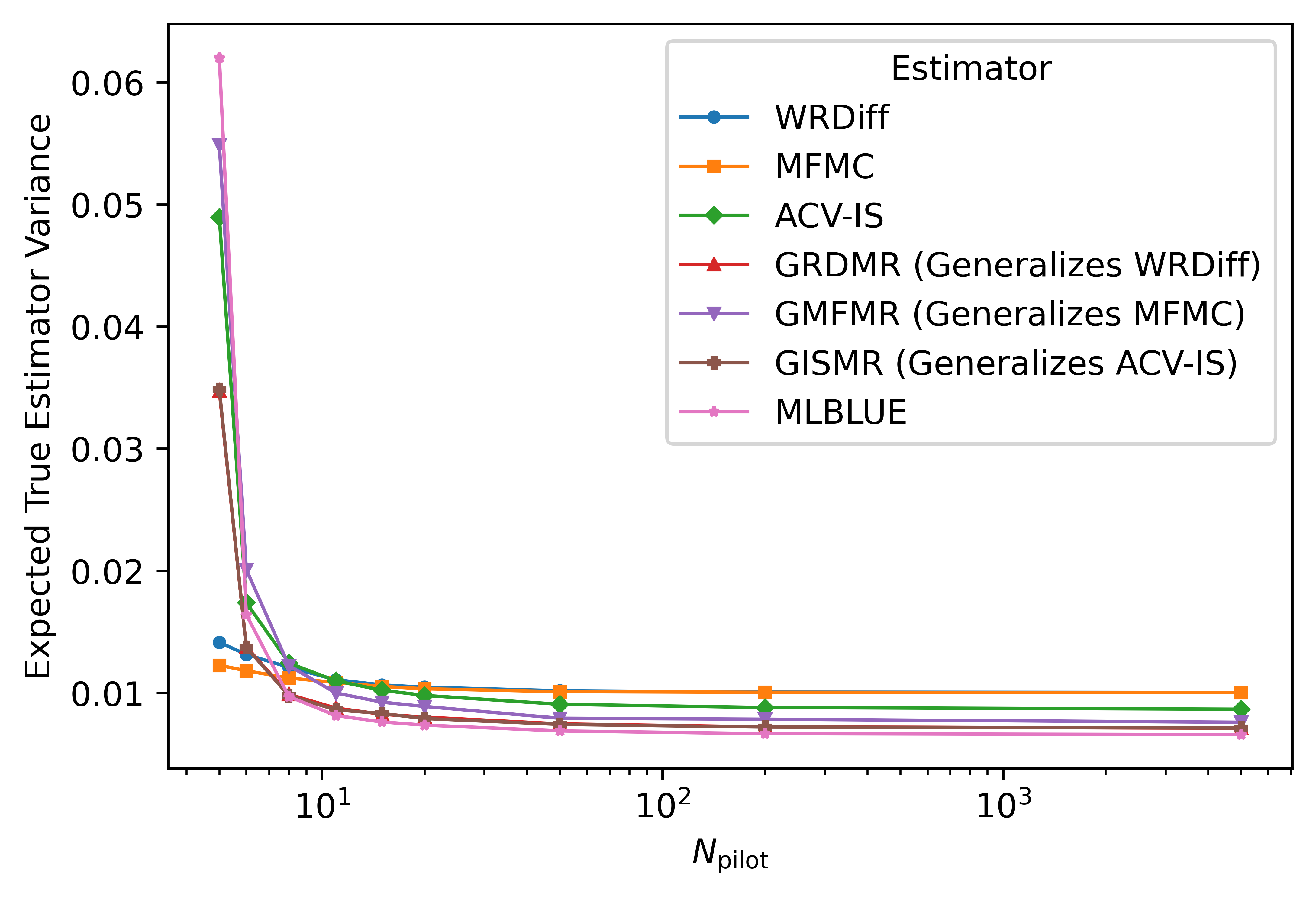}
        \caption{Expected true estimator variances for each multi-fidelity estimator under random model scenarios, across different pilot sample sizes. This is a measure of the \emph{absolute} performance of each estimator under limited pilot samples.}
	\label{fig:lkj_true_vars}
\end{figure}

\begin{figure}[ht!]
        \centering
        \includegraphics[width=0.75\textwidth]{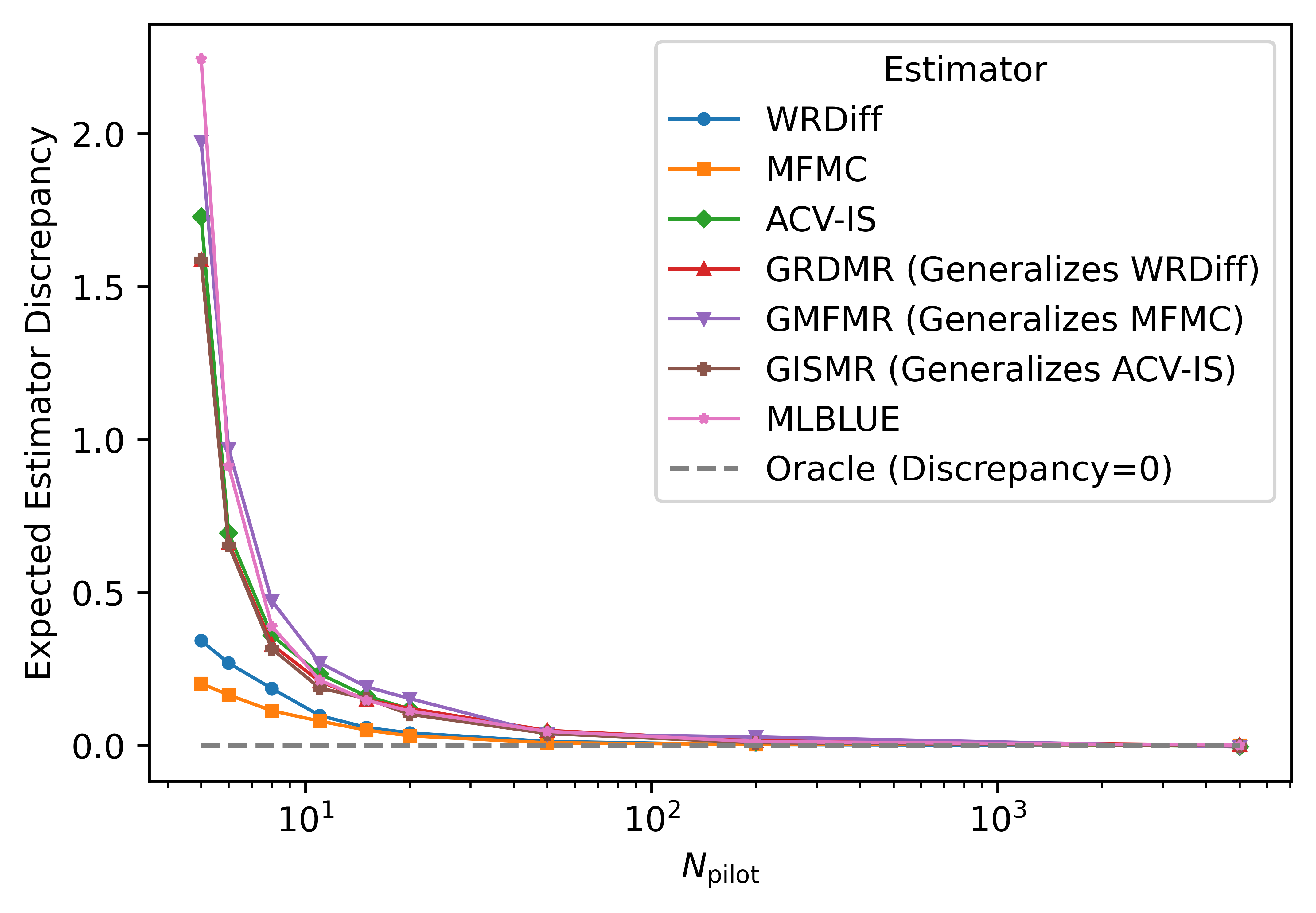}
        \caption{Expected estimator discrepancies for each multi-fidelity estimator under random model scenarios, across different pilot sample sizes. This is a measure of the \emph{relative} performance of each estimator under limited pilot samples.}
	\label{fig:lkj_discrep}
\end{figure}

We also repeat the projected estimator variance test, computing the estimator variance under the sample covariance, $\estvar(\hat{\params},\hat{\bSigma})$ and comparing it to the true estimator variance $\estvar(\hat{\params},\bSigma)$ for each trial, plotted in \Cref{fig:ratios_lkj}.
Interestingly, the problem of false overconfidence is actually worse for the more general estimators in the case of random models.
Since these estimators can better leverage unusual modeling scenarios, the ability to underpredict the true estimator variance under pilot sampling variability appears to be exacerbated, with MLBLUE underpredicting its estimator variance by roughly 23 fold whereas MFMC only underpredicts by a factor of roughly 1.8.

\begin{figure}[ht!]
        \centering
        \includegraphics[width=0.75\textwidth]{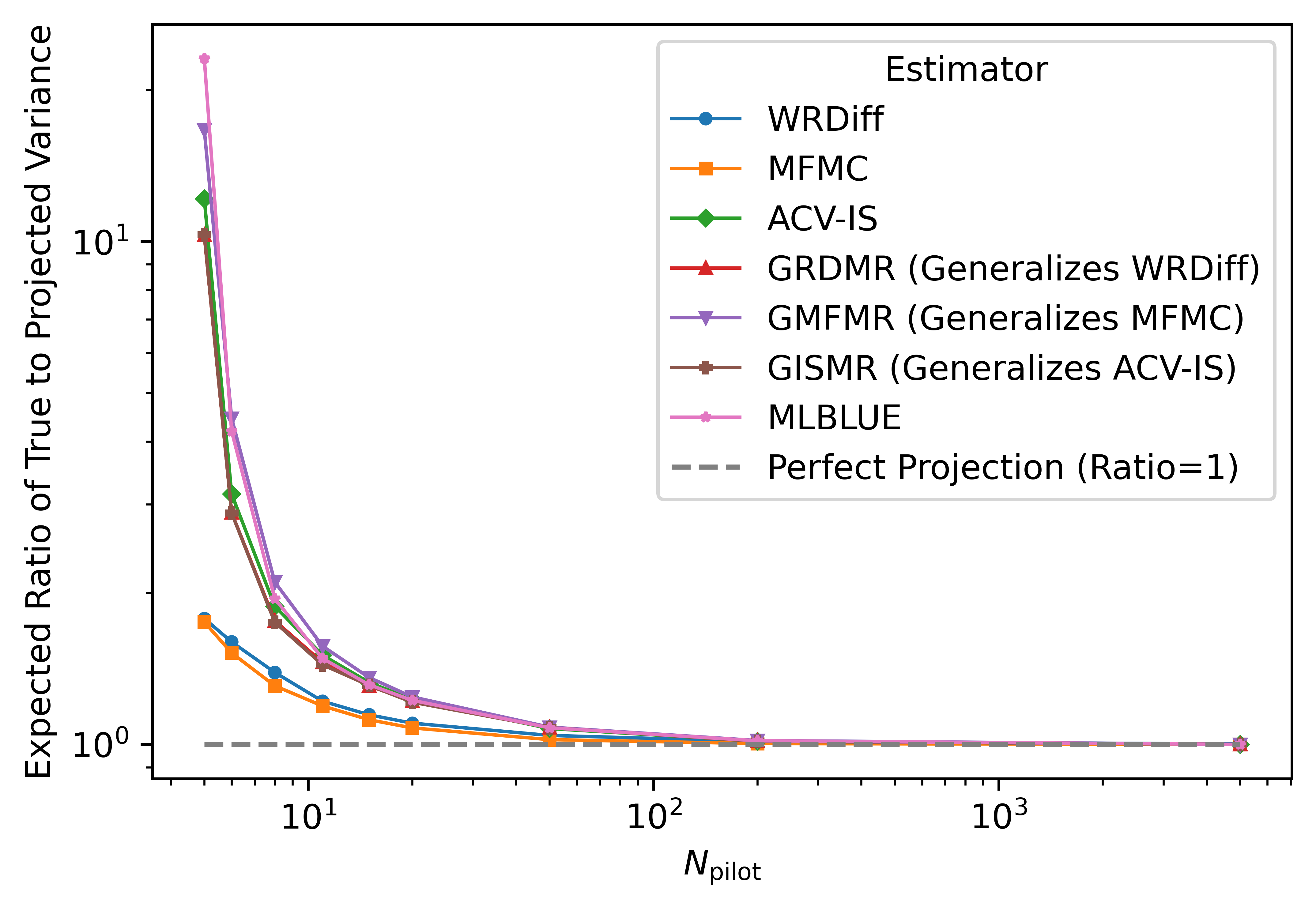}
        \caption{The expected ratio of true estimator variances, $\estvar(\hat{\params};\bSigma)$, to projected estimator variances, $\estvar(\hat{\params};\hat{\bSigma})$ across different pilot sample sizes. This is a measure of the \emph{overconfidence} risk of each estimator under limited pilot samples}
	\label{fig:ratios_lkj}
\end{figure}

\section{Variance-Based Global Sensitivity Analysis (GSA) Using\\Shapley Values}\label{sec:shap_app}

A number of GSA metrics have been proposed in the literature, ranging from variance-based metrics such as Sobol' indices \citep{SOBOL2001271} to density-based metrics such as the $\delta$-importance measure \citep{Borgonovo2017}.
In this work, we restrict ourselves to the bi-fidelity case using MFMC and adopt a variance-based approach based on Shapley values \citep{owen2017} that is amenable to dependent inputs --- since we assume the sample covariance matrix follows a Wishart distribution, $(\npilot-1) \hat{\bSigma} \sim \wishart(\bSigma, \npilot-1)$, it implies a joint distribution over the sample correlation $\hat{\rho}$ and sample standard deviations $\hat{\sigma}_0,\hat{\sigma}_1$ such that these inputs are not independent.
Many methods, such as Sobol' indices, lose their interpretability when the independent inputs assumption is violated.
Many extensions, such as generalized Sobol' indices, aim to disentangle the issue of mutual independence of the model inputs, each with their own interpretations and drawbacks.
The most popular method introduces \textit{Shapley values} (or \textit{Shapley effects}) as a sensitivity measure \citep{owen2014,song2016,owen2017}.
As a quantity used in game theory, Shapley values provide a uniquely fair way to distribute the total ``payoff" (in our case this is the variance of the MFMC estimator variance, $\Var[\discrepancy(\hat{\bSigma},\bSigma)]$) among the ``players" (in our case this is the input parameters, $\mathbf{x}_\cD = \{\hat{\rho},\hat{\sigma}_0,\hat{\sigma}_1\}$). Interestingly, the seminal work  \citep{Shapley} proves that it is the only such measure that satisfies the four axioms of a fair game, namely efficiency, symmetry, dummy/null, and additivity. 

The Shapley value $\phi_d$ for an input $x_d$ is its average marginal contribution to the variance, computed over all possible subsets $S$ of the other inputs. Let $\cD$ be the index set of the inputs, with cardinality $D=\vert\cD\vert$.
The general formula is:
\begin{align}\label{eqn:shapley}
    \phi_d = \frac{1}{\Var[\discrepancy(\hat{\bSigma},\bSigma)]} \sum_{S \subseteq \cD \setminus \{d\}} \frac{|S|! (D - |S| - 1)!}{D!} \times \nonumber  \left( \Var[\mathbb{E}[\discrepancy(\hat{\bSigma},\bSigma) \vert x_{S \cup \{d\}}]] - \Var[\mathbb{E}[\discrepancy(\hat{\bSigma},\bSigma) \vert x_S]] \right)
\end{align}
In this formulation, $S$ is a subset of inputs not containing $d$, and $\Var[\mathbb{E}[\discrepancy(\hat{\bSigma},\bSigma) | x_S]]$ represents the variance of the MFMC discrepancy that is explained by the subset of inputs $x_S$. This formula calculates the ``fair" contribution by averaging the marginal increase in explained variance that $x_d$ provides, weighted across all possible coalition sizes.
Computationally, $\phi_d$ is estimated using MC or kernel methods --- we do so using the nearest neighbor search method proposed in \citep{broto2020,azadkia2021} in the $\mathtt{sensitivity}$ package in $\mathtt{R}$ using $10^4$ MC samples at each tested oracle $\bSigma$.

\section{Technical details for solving DDMM} \label{sec:solving_ddmm_tech_details}

\subsection{Computing the expected discrepancy array} \label{ss:exp_discp_arr}

As stated in \Cref{ss:minimax_solve}, we trade the theoretical complexity of solving the minimax optimization of DDMM for computational complexity such that the solution can be accurately approximated using vectorized array operations.
Implementing this idea requires computing expected discrepancy values over a dense grid of $\btheta \in \Theta$ (the adjustment parameter space) and $\rho \in (0, 1)$ (the true correlation space) to ensure small approximation error.
While such a dense gridding can be achieved in a brute-force embarrassingly parallel approach, we concluded this approach would not realistically be feasible for implemention on a personal computer.
As such, we instead leverage smoothness assumptions about the expected discrepancy \emph{surface} over $(\btheta, \rho)$.

For each $(\btheta, \rho) \in \Theta \times (0, 1)$, we define the expected discrepancy surface as follows,
\begin{equation} \label{eq:M_theta_rho}
    M(\btheta, \rho) := \mathbb{E}_{\hat{\rho}} \left[ \discrepancy\left(g(\btheta; \hat{\rho}), \rho \right) \right] = \int_{-1}^1 \discrepancy\left(g(\btheta; r), \rho \right) f(r \mid \rho, N) dr,
\end{equation}
where $f(r \mid \rho, N)$ is the sample correlation density under the bivariate Gaussian assumption as defined in \Cref{app:samp_corr_wishart} when the true correlation is $\rho$ and the sample correlation is composed of $\npilot$ pilot samples.
As remarked in \Cref{app:samp_corr_wishart}, $f(r \mid \rho, N)$ changes smoothly as a function of $\rho$ and since $g(\btheta; r)$ is a sigmoid function, it changes smoothly as a function of $\btheta$.
Thus, it is reasonable to assume the surface $M(\btheta, \rho)$ is smooth over its inputs.

Since $M$ is smooth and defined over a compact domain, it is square integrable ($M \in L^2$).
Thus, the operator defined by $M$ is a Hilbert-Schmidt operator and implies the kernel has an infinite sum expansion,
\begin{equation}
    M(\btheta, \rho) = \sum_{i = 1}^\infty \sigma_i u_i(\btheta) v_i(\rho),
\end{equation}
where $\sigma_i \geq 0$ and $\sigma_i \to 0$ as $i \to \infty$, where the $\sigma_i$ values rapidly decay \cite{rudin1991functional}.
For each $i$, $u_i(\btheta)$ is also a bivariate function and if we use a similar infinite expansion,
\begin{equation}
    u_i(\btheta) = \sum_{j = 1}^\infty \lambda_{ij} b_{ij}(\theta_0) c_{ij}(\theta_1),
\end{equation}
hence, we get the following tensor decomposition,
\begin{equation}
    M(\btheta, \rho) = \sum_{k = 1}^\infty v_k(\rho) b_k(\theta_0) c_k(\theta_1),
\end{equation}
where we reindex $(i, j)$ to $k$ and absorb the $\sigma_i$ and $\lambda_{ij}$ values into the basis functions.
Since the spectrum rapidly decays, we rely upon the following approximation,
\begin{equation} \label{eq:finite_approx}
    M(\btheta, \rho) \approx \sum_{k = 1}^R v_k(\rho) b_k(\theta_0) c_k(\theta_1),
\end{equation}
where $R < \infty$.

Motivated by \Cref{eq:finite_approx}, \Cref{alg:exp_discrp_compute} computes $M$ at a small collection of $(\btheta, \rho)$ locations within its compact domain to form an array.
We then perform a canonical polyadic (CP) tensor decomposition using the tensorly package \cite{kossaifi2019tensorly}, which  provides us with a finite collection of evaluations of the orthonormal functions $v_k$, $b_k$, and $c_k$.
For each function, we use a cubic spline to approximate the function values at the $(\btheta, \rho)$ values not present in the grid.
Finally, using the fitted splines, any desired grid resolution of the function values can easily be obtained and recomposed by recomposing the tensor.

For the minimax optimizations in this paper, we achieved negligible minimax solution error using the following fine-dimension settings; $d'_{\theta_0} = 200$, $d'_{\theta_1} = 200$, and $d'_\rho= 1000$.
To provide a guideline around the choice of $R$ in \Cref{eq:finite_approx} and demonstrate the accuracy of this computational approach, we computed the output of \Cref{alg:exp_discrp_compute} using a brute force approach (i.e., we parallelized computing each $M_{ijk}$ component) and the approach in \Cref{alg:exp_discrp_compute} using different $R$ values.
Denote the brute-force output by $\bm{M}_{true}$ and the output of \Cref{alg:exp_discrp_compute} by $\bm{M}(R)$.
We consider relative error as a function of $R$,
\begin{equation}
    e(R) := \frac{\lVert \bm{M}(R) - \bm{M}_{true} \rVert_F}{\lVert \bm{M}_{true}\rVert_F},
\end{equation}
and show the result in the left panel of \Cref{fig:exp_discrep_diagnostics}.
The center and right panels of \Cref{fig:exp_discrep_diagnostics} show two arbitrarily selected orthonormal functions from both the brute-force expected discrepancy array and the reconstructed one.
Overall, both closely match, although we observe a slight deviation in the center panel on the boundary of the $\rho$ space.

\begin{figure}
    \centering
    \includegraphics[width=0.95\textwidth]{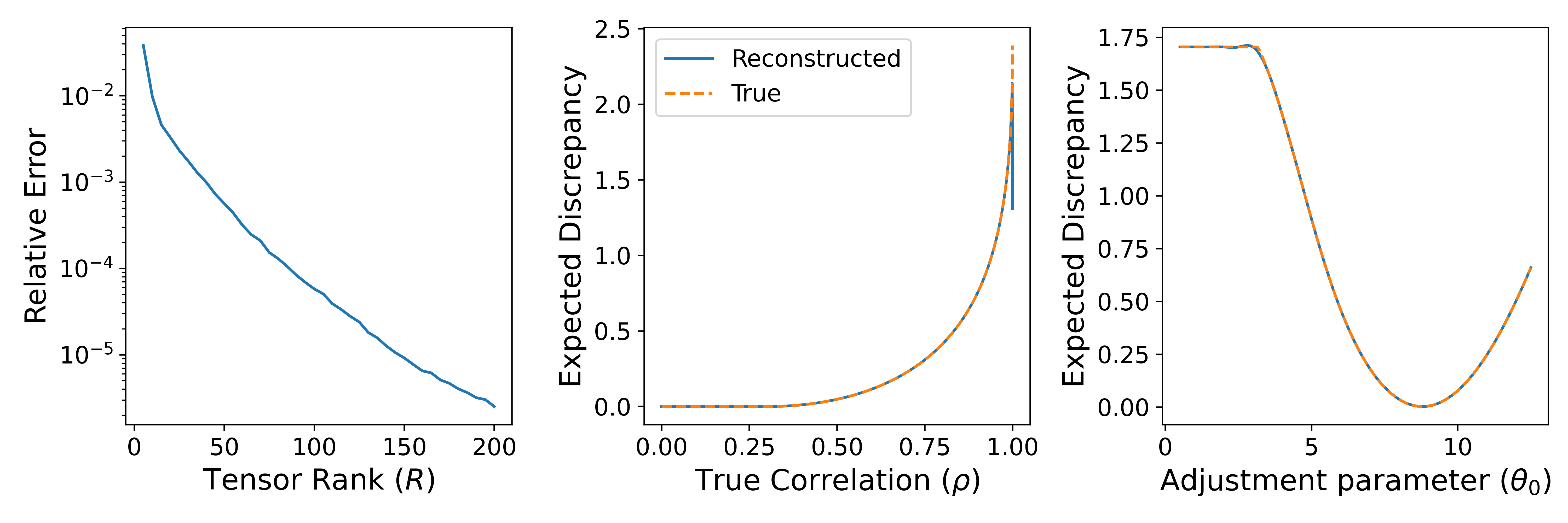}
    \caption{\textbf{(Left)} Relative error of the expected discrepancy array reconstruction as a function of the tensor decomposition rank ($R$). \textbf{(Center)} Expected discrepancy for a fixed $\btheta$ along the true correlation space for both the brute-force computed array (dashed orange) and the reconstruction (solid blue). The reconstructed functions can have slight errors on the boundary of the region. \textbf{(Right)} Expected discrepancy for a fixed $\rho$ and $\theta_1$ shows close agreement between the brute-force and reconstructed versions.}
    \label{fig:exp_discrep_diagnostics}
\end{figure}

\begin{algorithm}[H]
    \caption{Computing the expected discrepancy array}
    \label{alg:exp_discrp_compute}
    \begin{algorithmic}[1] 
    \REQUIRE  An adjustment parameter/correlation space $\Theta \times (0, 1)$, a coarse dimension $d_{\theta_0} \times d_{\theta_1} \times d_\rho$, a final dimension $d'_{\theta_0} \times d'_{\theta_1} \times d'_\rho$, and a tensor rank $R > 0$.
    \ENSURE $\bm{M}_{fine} \in \mathbb{R}^{d'_{\theta_0} \times d'_{\theta_1} \times d'_{\rho}}$.

    \STATE Construct coarse grid over $(\theta_0, \theta_1, \rho) \in \Theta \times (0, 1)$.
    \STATE For $i \in [d_{\theta_0}]$, $j \in [d_{\theta_1}]$, and $k \in [d_{\rho}]$, compute the coarse-grid array $\bm{M}_{coarse} \in \mathbb{R}^{d_{\theta_0} \times d_{\theta_1} \times d_{\rho}}$, where $M_{ijk} = \mathbb{E}_{\hat{\rho}} \left[ \discrepancy \left( g(\btheta_{ij}; \hat{\rho}), \rho_k \right) \right]$. each expectation is compuated by numerically solving \Cref{eq:M_theta_rho}.
    \STATE Perform CP tensor decomposition of array \cite{kossaifi2019tensorly}.
    \STATE Fit a spline to each orthogonal column of decomposed matrices.
    \STATE Up-sample $(\theta_0, \theta_1, \rho)$ grid to desired resolution and use splines to create new up-sampled tensor factors.
    \STATE Reconstruct array at desired resolution.
    \STATE \textbf{return} Fine-grid expected discrepancy array, $\bm{M}_{fine} \in \mathbb{R}^{d'_{\theta_0} \times d'_{\theta_1} \times d'_{\rho}}$.
    \end{algorithmic}
\end{algorithm}

\subsection{Correlation confidence intervals and computing their surrogates} \label{ss:cis_and_surrogates}
Once the expected discrepancy array detailed in \Cref{ss:exp_discp_arr} is computed, the necessary root-finding algorithms to find the confidence interval to solve the DDMM optimization are the computational bottleneck.
This computational burden becomes significant when one wants to solve the DDMM optimization many times (either to look at the adjustment under a collection of $\alpha$ settings or to implement our procedure to choose an optimal $\alpha$ as described in \Cref{ss:optimizing_alpha}).
As shown in \Cref{ss:computing_cis}, we care about equations of the form,
\begin{equation}
    p(\alpha, r; \rho) := \frac{\alpha}{2} - \mathbb{P}\left( \hat{\rho} < r \mid \rho, N \right) = \frac{\alpha}{2} - \int_{-1}^r f(t \mid \rho, N) dt,
\end{equation}
where for a particular $(\alpha, r)$, the root-finding algorithm finds $\rho'$ such that $p(\alpha, r; \rho') = 0$.
The function $p$ is clearly linear in $\alpha$ and smooth in $r$ since the sample correlation density is smooth.
As such, similar to the intuition in \Cref{ss:exp_discp_arr}, the confidence interval endpoints for a particular desired miscoverage level, $\alpha$, and observed sample correlation, $r$ should be similar to the endpoints at a nearby setting, e.g., $(\alpha + \epsilon, r + \epsilon)$, where $\epsilon$ is small.
Using this smoothness intuition, we reduce the computational bottleneck by training confidence interval endpoint surrogate models over the space of possible miscoverage levels and sample correlations (i.e., $(0, 1) \times (-1, 1)$).

To fit these surrogates, we fix a pilot sample size, and define a grid over the miscoverage and sample correlation space.
For each grid point, we solve the lower and upper endpoint root-finding problems.
For each collection of endpoint points (lower and upper), we fit a bivariate spline to generate the endpoint surrogates.
Once the surrogates are fit, we have a computationally efficient way to compute a confidence interval for arbitrary $(\alpha, r)$.
In speed tests performed on a personal computer, the spline computes intervals in the order of microseconds while the root-finding algorithm in the order of milliseconds (a speedup of three orders of magnitude)
This efficiency boost facilitates the optimal-$\alpha$ algorithm in \Cref{ss:optimizing_alpha}.
The full procedure is written in \Cref{alg:ci_surrogates}.

\begin{algorithm}[H]
    \caption{Constructing the confidence interval surrogate models}
    \label{alg:ci_surrogates}
    \begin{algorithmic}[1] 
    \REQUIRE The number of pilot samples $\npilot \in \mathbb{N}$, the number of grid/sample values $n \in \mathbb{N}$
    \ENSURE Lower and upper two-dimensional splines returning correlation confidence intervals for arbitrary $(\alpha, r) \in (0, 1) \times (-1, 1)$ when the sample correlation is composed of $\npilot$ samples. 

    \STATE Grid (or sample) $n$ settings over the space of miscoverage levels and possible observed sample correlations - $(\alpha, r) \in (0, 1) \times (-1, 1)$.
    \STATE For each $\{(\alpha_i, r_i)\}_{i = 1}^n$, compute the $1 - \alpha_i$ confidence interval, $I_i = [l_i, u_i]$, according to the procedure in \Cref{ss:computing_cis}.
    \STATE Using the intervals $\{ I_i \}_{i = 1}^n$, fit a bivariate spline on the lower endpoints and a bivariate spline on the upper endpoints to obtain two endpoint surfaces over $(0, 1) \times (-1, 1)$.
    \end{algorithmic}
\end{algorithm}

\Cref{fig:ci_spline_and_diff} shows the result of the lower and upper endpoint spline fits alongside a comparison of the intervals found via root-finding against those found via the spline.
The bivariate spline well-captures the smooth endpoint surfaces and we observe that the differences between the root-finding and the spline outputs are at worst $1$\% of the total possible confidence interval length.

\begin{figure}
    \centering
    \includegraphics[width=0.99\textwidth]{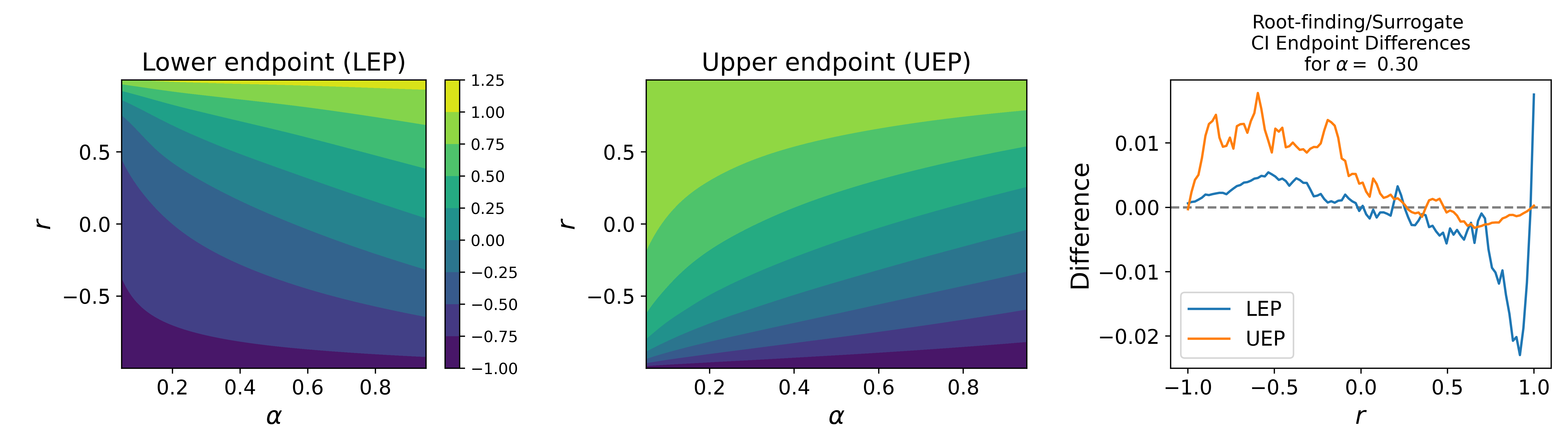}
    \caption{\textbf{(Left)} and \textbf{(Center)} panels show the bivariate spline fits for the lower endpoint (LEP) and upper endpoint (UEP) confidence intervals, respectively. The splines well-capture the smooth endpoint surface over the $(\alpha, r)$ space. \textbf{(Right)} for $\alpha = 0.3$, this panel shows the difference between the interval endpoints found via the root-finding algorithm and the bivariate spline. The absolute difference is less than $0.02$ almost everywhere, so given that the whole range of correlations is length $2$, the worst-case approximation error is approximately $1$\%.}
    \label{fig:ci_spline_and_diff}
\end{figure}

\subsection{Optimizing DDMM confidence level} \label{ss:optimizing_alpha}
This section provides the details of choosing $\alpha$ as described in \Cref{ss:computing_cis}.
For a chosen $\alpha \in (0, 1)$, there is a resulting expected discrepancy difference (EDD, see \Cref{eq:edd}) with respect to the unadjusted sample correlation.
We ultimately wish to make this difference as negative as possible over $\rho \in (0, 1)$ and since $\alpha$ affects the balance between robustness (conservatism) and optimality, we claim that there are $\alpha$ settings achieving this goal.
The following procedure is meant to be performed \emph{before} observing the pilot samples.

Define the following surface over the space of possible miscoverage settings and true correlations, $(\alpha, \rho) \in (\alpha_l, \alpha_u) \times (\rho_l, \rho_u) \subset (0, 1)^2$,
\begin{equation} \label{eq:alpha_rho_surface}
    f(\alpha, \rho) := \mathbb{E}_{\bar{\by}} \left[ \discrepancy(h_\alpha(\bar{\by}), \rho) - \discrepancy(\hat{\rho}, \rho) \right],
\end{equation}
where $h_\alpha(\bar{\by})$ denotes the DDMM estimator defined by $\alpha$ and the $\npilot$ samples, $\bar{\by}$, while $\hat{\rho}$ denotes the usual sample correlation computed via the $\npilot$ samples.
Since we target an estimator $h$ that dominates the sample correlation, we ideally want to pick $\alpha$ such that $f(\alpha, \rho) < 0$ for all $\rho \in (\rho_l, \rho_u)$.
Then, of all the $\alpha$ such that this condition holds, we want to pick the \emph{best} ones, leading to the following minimax quantity,
\begin{equation} \label{eq:alpha_star}
    \alpha^* := \underset{\alpha \in (\alpha_l, \alpha_u)}{\text{argmin}} \underset{\rho \in (\rho_l, \rho_u)}{\max} f(\alpha, \rho),
\end{equation}
i.e., the $\alpha$ minimizing the worst-case expected discrepancy difference.
For each $(\alpha, \rho)$, computing \Cref{eq:alpha_rho_surface} is nontrivial since computing $h_\alpha(\hat{\by})$ involves solving a minimax problem.
However, under the bivariate Gaussian assumption and having computed the expected discrepancy array detailed in \Cref{ss:exp_discp_arr} along with the confidence interval surrogates detailed in \Cref{ss:cis_and_surrogates}, we can easily sample realizations of the random process defined at any $(\alpha, \rho)$.
As such, we can use any regression (ideally nonparametric so we avoid structural assumptions) to estimate the conditional mean surface, i.e., $f(\alpha, \rho)$.
By definition, regression minimizes squared-error loss and provides the optimal estimate of this conditional expectation surface.
Once this surface is estimated, we can numerically solve the minimax problem in \Cref{eq:alpha_star} to obtain the $\alpha$ produces the smallest worst-case expected discrepancy difference.
Clearly, any regression approach to estimate $f(\alpha, \rho)$ will include sampling variability.
Although we do not explicitly characterize this variability for our resulting estimate, we do consider the bias-variance tradeoff resulting from our data-generating process, giving us training data that optimize expected squared error.

\subsubsection{Training data generation, Gaussian Process (GP) regression, and optimizing miscoverage level}
For a fixed $(\alpha, \rho)$, define the following random variable,
\begin{equation}
    \Delta(\alpha, \rho) := \discrepancy(h_\alpha(\bar{\by}), \rho) - \discrepancy(\hat{\rho}, \rho),
\end{equation}
where the randomness arises from the $\npilot$ samples, $\bar{\by}$.
Since $\Delta(\alpha, \rho)$ has a large variance relative to its mean for each $(\alpha, \rho)$, we increase the signal to noise ratio by constructing the training data using $D \in \mathbb{N}$ design points, and $R \in \mathbb{N}$ repetition per design point.
As such, for fixed $(D, R)$, we obtain data of the form,
\begin{equation}
    \mathcal{D} := \left\{(\alpha_i, \rho_i), \left(\hat{\mu}_i, \hat{\sigma}_i^2 \right) \right\}_{i = 1}^D,
\end{equation}
where
\begin{equation} \label{eq:mean_var}
    \hat{\mu}_i = \frac{1}{R} \sum_{j = 1}^R \Delta_{ij}, \quad \hat{\sigma}_i^2 = \frac{1}{R(R - 1)} \sum_{j = 1}^R \left(\Delta_{ij} - \hat{\mu}_i \right)^2, \quad \Delta_{ij} = \Delta(\alpha_i, \rho_i).
\end{equation}
In \Cref{ss:choose_data_sizes} we discuss our approach to allocating training data generating time between design points and repetitions.
Once the number of design points has been chosen, we generate them using Latin Hypercube Sampling (LHS) to ensure more uniform coverage of the desired region compared to sampling uniformly at random.

Given our smoothness assumption on $f(\alpha, \rho)$ and our desire to optimize over the output of the regression, a Gaussian Process regressor is a reasonable model choice \cite{gramacy2020surrogates, rasmussen2006gaussian}.
We use a GP with a nugget term, i.e., we model a deterministic smooth surface that we observe with additive Gaussian noise.
This assumption on the noise is reasonable by the Central Limit Theorem since we use $R$ repetitions to compute a mean for each output value and thus its sampling distribution is asymptotically Gaussian.
As discussed in Chapter 10 of \cite{gramacy2020surrogates}, we use the computed $\hat{\sigma}_i^2$ values to construct a multivariate nugget matrix essentially allowing for heteroskedatic GP regression, allowing the GP regressor to smooth the data and better respect the variance structure across the input space.
Note, the definition includes a $R(R - 1)$ term in the denominator since the standard error of the empirical mean contracts at rate $R^{-1/2}$, i.e., as we increase the number of repetitions, we further sharpen the signal to noise ratio since the sampling variance asymptotically vanishes.

When exploring the GP fitting process, we noticed that the argmax in the $\rho$ space is consistently found in the region where $\rho \geq 0.5$.
As such, we perform LHS on the top half of the desired region as shown in \Cref{fig:gp_surface_and_points}, which further shows the mean surface of the fitted GP.
\Cref{fig:max_over_rho_curve} then shows the pointwise mean surface over $\rho$ for each $\alpha$ which is then used to find $\alpha^*$ (as indicated by the dashed line).
Note that for all $\alpha$ in the computed range, we estimate that DDMM has a worst-case EDD less than zero, i.e., we can choose any $\alpha$ and still out-perform the sample covariance.
But clearly there is a unique $\alpha$ optimizing the performance.

\begin{figure}[htbp]
    \centering
    \begin{subfigure}[t]{0.48\textwidth}
        \centering
        \includegraphics[width=\textwidth]{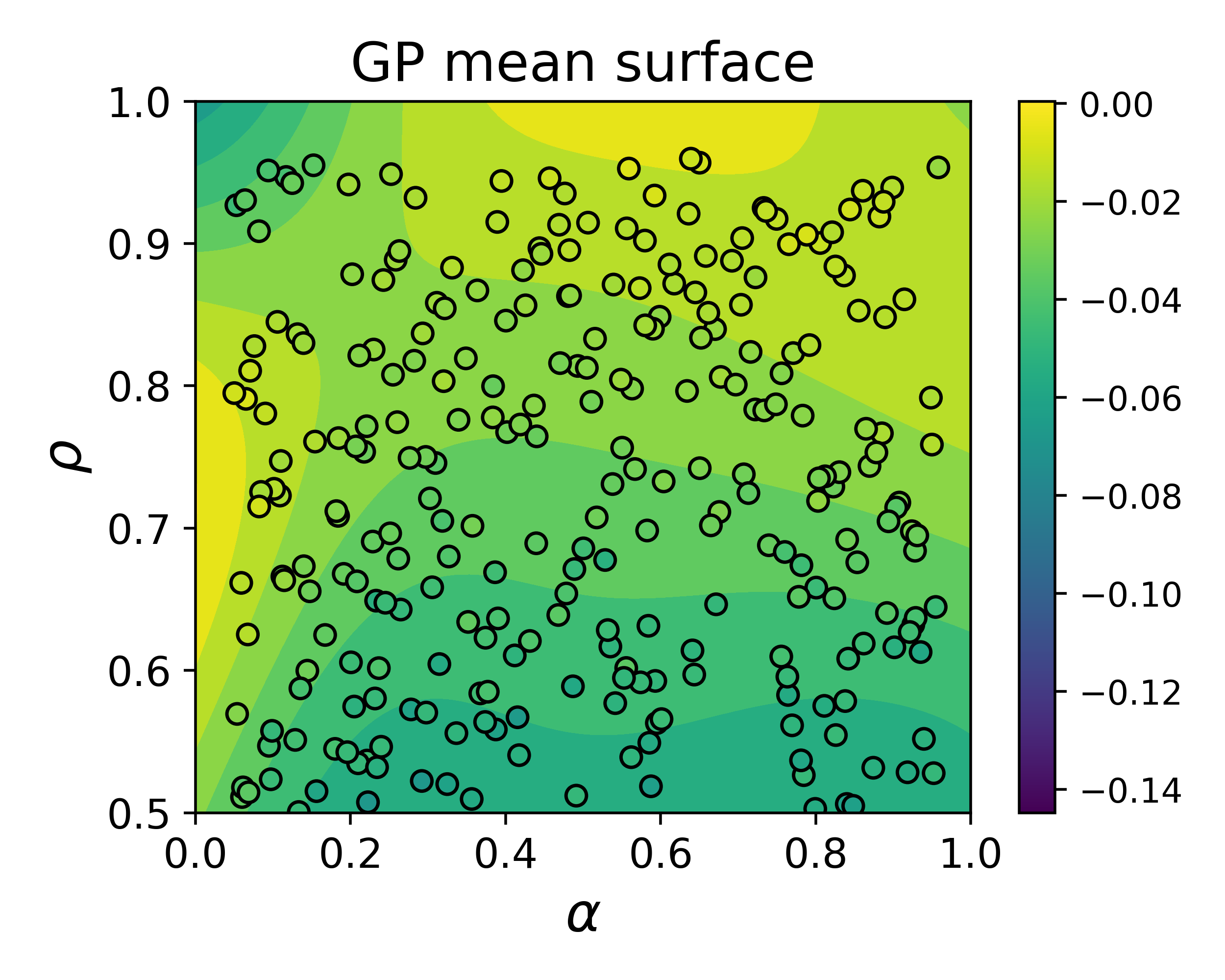}
        \caption{Design points generated from LHS and the mean surface of the fitted Gaussian Process regressor. The mean surface is the optimal estimate for the EDD defined in \Cref{eq:alpha_rho_surface}.}
        \label{fig:gp_surface_and_points}
    \end{subfigure}
    \hfill
    \begin{subfigure}[t]{0.48\textwidth}
        \centering
        \includegraphics[width=\textwidth]{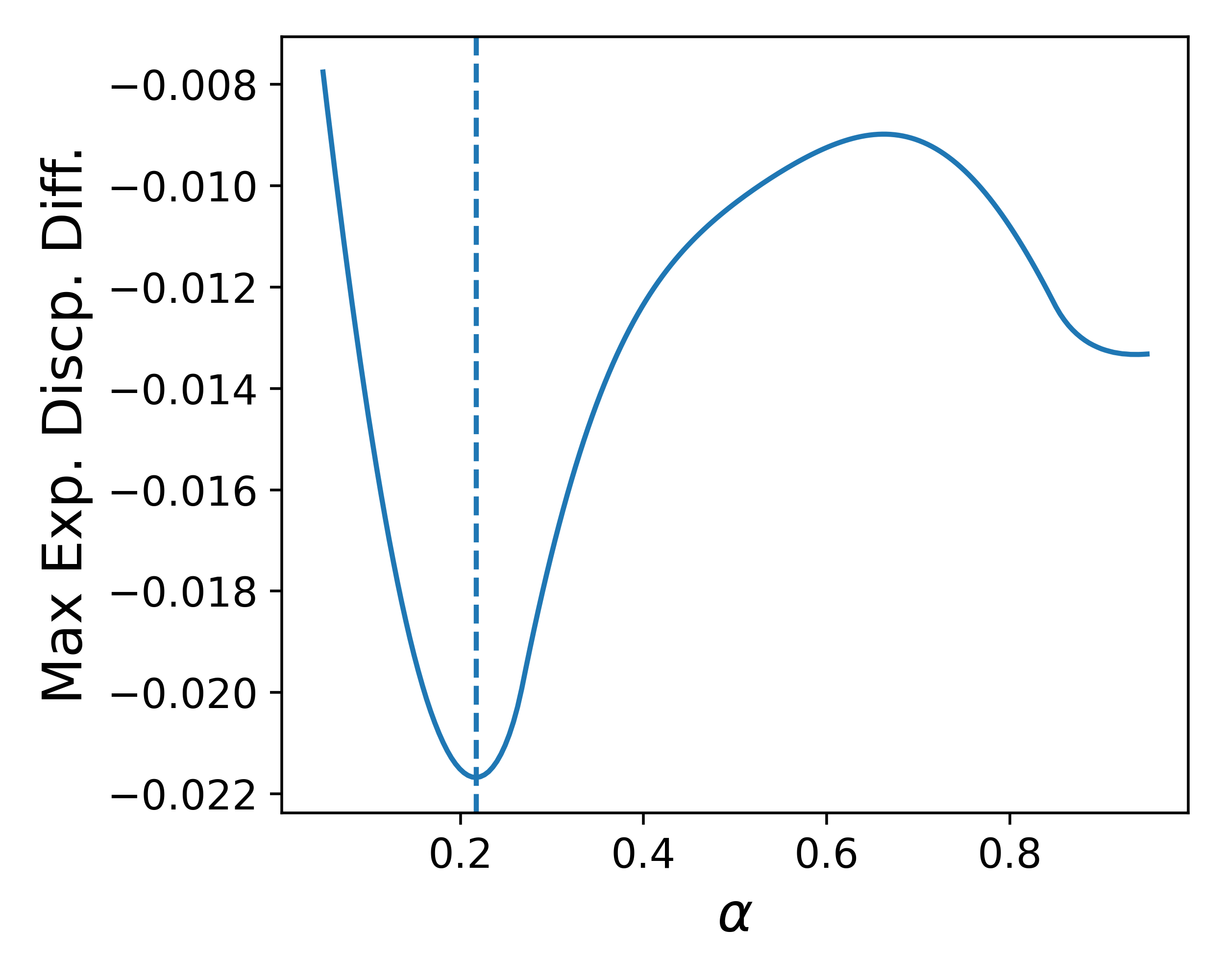}
        \caption{Using the mean GP surface, across the range of $\alpha$ values we show the surface maximum over $\rho$. The location where the resulting curve is minimized (shown via the dashed line) is the estimated optimal miscoverage level, $\alpha^*$.}
        \label{fig:max_over_rho_curve}
    \end{subfigure}
    \caption{Design points, the mean surface of the fitted Gaussian Process regressor, and the pointwise maximum surface for each $\alpha$.}
    \label{fig:gp_fit_and_opt}
\end{figure}

\subsubsection{Choosing number of design points and repetitions} \label{ss:choose_data_sizes}
To constrain the combinations of design points and repetitions to consider, we anchored our set of choices to those combinations that could be generated in $t = 900$ seconds on a personal computer.
All computational experiments were performed on a personal computer running Windows 11, equipped with an Intel Core i5-1245U processor (12 cores, 1.6 GHz) and 316 GB of RAM.
Some experimentation showed that each sample (where one sample is the combined generation of the sample correlation, its confidence interval, and the DDMM solution) requires approximately $r = 0.035$ seconds on the author's laptop.
Thus, we wish to consider all $(D, R)$ such that,
\begin{equation} \label{eq:pareto_front}
    DRr = t.
\end{equation}
Taking the logarithm of both sides yields a linear relationship shown in \Cref{fig:pareto_front}, confirming the intuition that if we use fewer design points, we can obtain more repetitions and vice versa.
To investigate different combinations along this line, we sampled a large dataset of design points ($D = 700$) and repetitions ($R = 276$).
These boundary values are indicated by the gray lines on \Cref{fig:pareto_front} and were chosen somewhat arbitrarily to fit within local computational constraints.
\Cref{fig:pareto_front} shows combinations of $(D, R)$ satisfying \Cref{eq:pareto_front}.
We generate a grid of $(D, R)$ values such that each $D$ value is to the left of the vertical line and each $R$ is below the horizontal line.
We refer to the index enumerating these settings along this Pareto front as the index on the Pareto front, as used in Figures~\eqref{fig:risk_fit} and \eqref{fig:bias_variance}.
Using all design points and repetitions, we estimate the conditional expectation surface in \Cref{eq:alpha_rho_surface} and find the corresponding minimax $\alpha$ as specified by \Cref{eq:alpha_star}.
We refer to this optimal settings at $\alpha_{orcl}$, as it serves the purpose of an oracle setting in choosing the best combination of design points and repetitions.
Note that this setting is depicted as the intersection point of the dashed gray lines in \Cref{fig:pareto_front} and is ``over-budget'' with respect to the Pareto front.

Let $K := (D, R)$ denote an arbitrary design point and repetition setting.
Although all $K$ values falling on the Pareto front have the same computational time, focusing the budget on design point coverage versus repetition count produces different results.
To locate the optimal setting, we propose the following procedure minimizing the \emph{risk} associated with a particular choice, $K$,
\begin{equation} \label{eq:risk_alpha}
    \mathcal{R}(K) := \mathbb{E} \left[\left(\hat{\alpha}_K - \alpha_{orcl} \right)^2 \right] = \text{bias}\left( \hat{\alpha}_K \right)^2 + \text{Var}\left( \hat{\alpha}_K \right),
\end{equation}
where $\hat{\alpha}_K$, denotes the solution to \Cref{eq:alpha_star} using a GP mean surface fitted on $D$ randomly generated design points, each with $R$ randomly generated repetitions and we have included the usual bias-variance decomposition of the risk under the squared loss function.
According to functional linear approximations in empirical process theory \cite{kosorok2008introduction, Pollard1989AsymptoticsVE} on the estimated conditional expectation surface, the bias and variance can be approximated as follows,
\begin{equation}
    \text{bias}\left( \hat{\alpha}_K \right) \approx \frac{b_1}{R}, \quad \text{Var}\left( \hat{\alpha}_K \right) \approx \frac{c_1}{D} + \frac{c_2}{DR}.
\end{equation}
We approximate \Cref{eq:risk_alpha} as follows,
\begin{equation}
    \mathcal{R}(K) \approx \mathcal{R}_{appr}(K; b_1, c_1, c_2) := \left(\frac{b_1}{R} \right)^2 + \frac{c_1}{D} + \frac{c_2}{DR},
\end{equation}
and thus the risk curve along the Pareto front can be approximated by fitting the parameters $(b_1, c_1, c_2)$.

Ideally, for each $K$, we could fully resample the data-generating process by producing $D$ design points and $R$ repetitions to with a large number of resamples to estimate \Cref{eq:risk_alpha}.
Instead, we subsample from our large dataset for each $K$ setting along the Pareto front.
For each setting, $K$, we subsample $D$ design points and $R$ repetitions $100$ times to estimate \Cref{eq:risk_alpha}, yielding a set of pairs,
\begin{equation}
    \left\{\left(K_i, r_i \right) \right\}_{i = 1}^I, \quad r_i = \frac{1}{100} \sum_{j = 1}^{100} (\hat{\alpha}_{K_i} - \alpha_{orcl})^2.
\end{equation}
The pairs $(K_i, r_i)$ are the blue dots in \Cref{fig:risk_fit}.
With these data, we can now solve the following least-squares problem to find the best $(b_1, c_1, c_2)$ settings for $\mathcal{R}_{appr}$,
\begin{align}
    \underset{b_1, c_1, c_2}{\min} &\quad \sum_{i = 1}^{100} \left[\hat{\sigma}_i^{-1} \left(r_i - \mathcal{R}_{appr}(K_i; b_1, c_1, c_2) \right) \right]^2 \nonumber \\
    \text{subject to} &\quad c_1 \geq 0, \; c_2 \geq 0,
\end{align}
where $\hat{\sigma}_i$ is the estimated standard deviation at the $i$th Pareto front index.
The fitted function is shown by the blue solid curve in \Cref{fig:risk_fit} and shows that risk decreases as we move along the Pareto front, i.e., we trade repetitions for design points.
We then investigate the extrapolated risk curve for $K$ settings beyond those accessible from the dataset from which we subsampled, producing the orange solid curve in \Cref{fig:risk_fit}.
We observe that risk again decreases for $K$ values slightly beyond the region of our dataset, but then quickly increases as we continue trading repetitions for design points.
This result makes sense given that we are fundamentally fitting a Gaussian Process regressor from the resulting dataset and that the additive Gaussian noise assumption erodes as we lose repetitions.
Said differently, as we lose repetitions per design point, the regressor model class incurs more systematic bias.
This intuition can be seen by looking at the fitted bias and variance curves in \Cref{fig:bias_variance}.

\begin{figure}[htbp]
    \centering
    \begin{subfigure}[t]{0.32\textwidth}
        \centering
        \includegraphics[width=\textwidth]{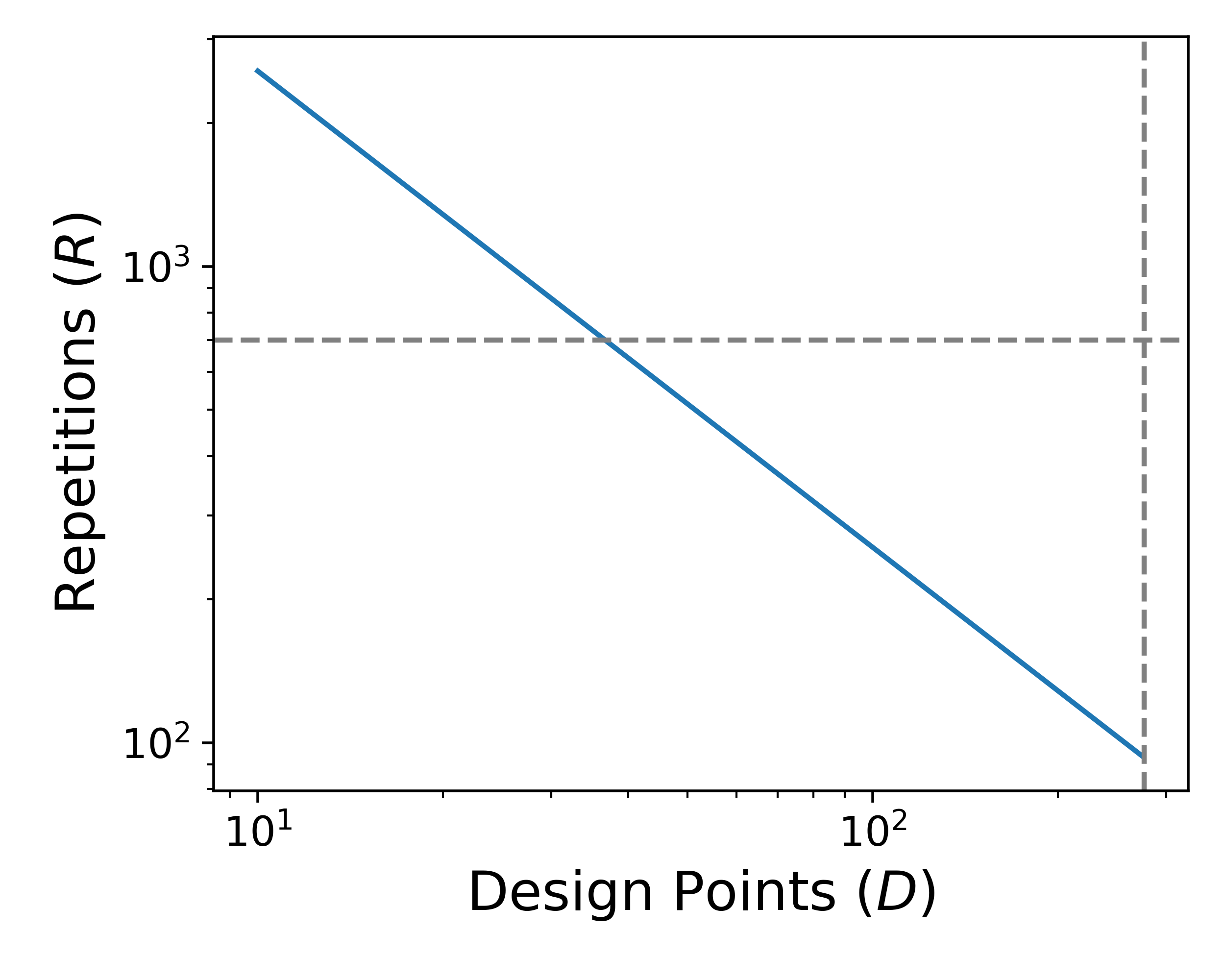}
        \caption{The Pareto front of repetitions ($R$) against design points ($D$) on a log-log scale as defined by \Cref{eq:pareto_front} where the level of the blue line is chosen such that the dataset for computing $\alpha$ can be finished in $t = 900$ seconds on our personal laptop. The dashed gray lines indicate the sizes of the total design point and repetition dataset generated for the purpose of the remaining analysis.}
        \label{fig:pareto_front}
    \end{subfigure}
    \hfill
    \begin{subfigure}[t]{0.32\textwidth}
        \centering
        \includegraphics[width=\textwidth]{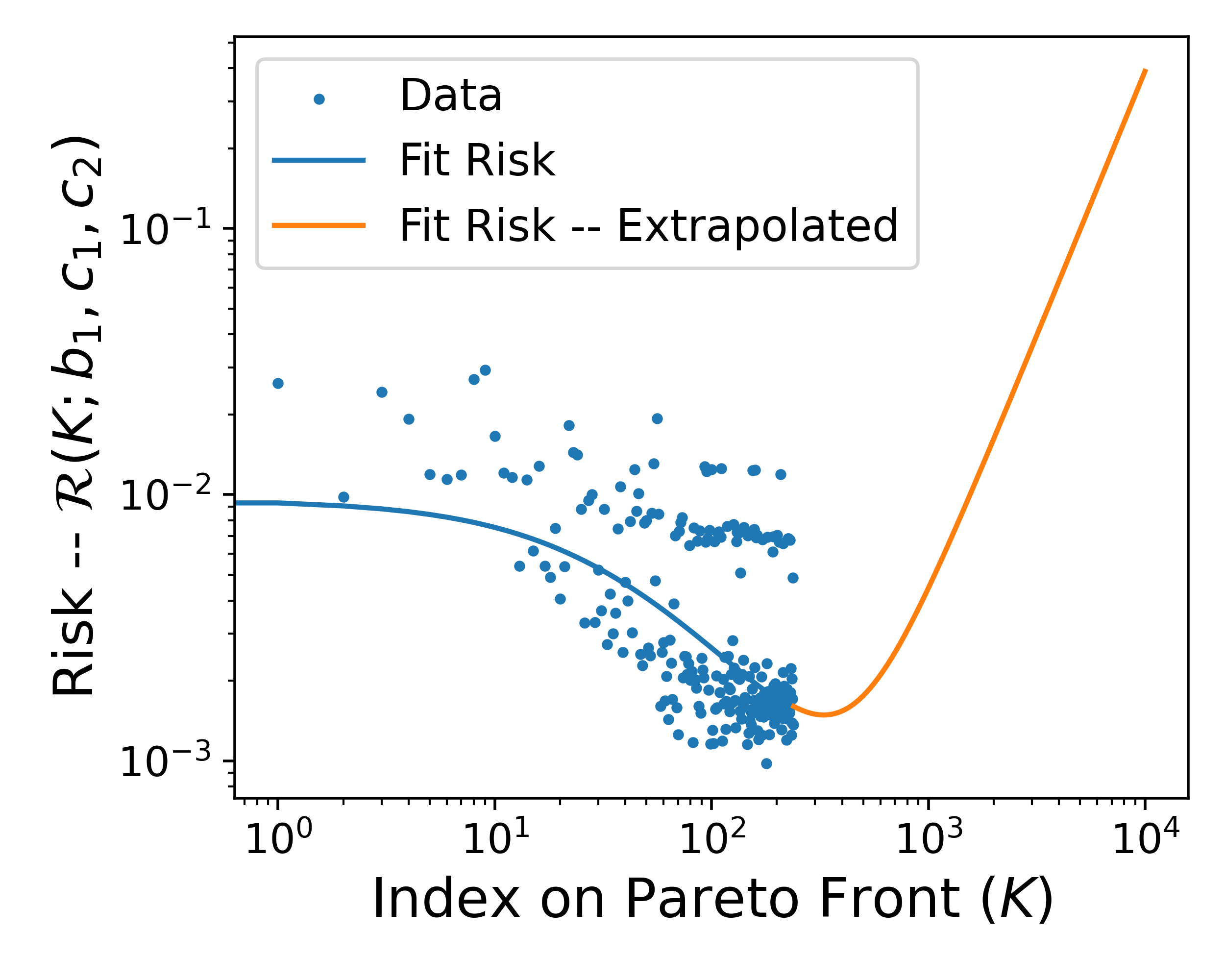}
        \caption{Risk against Pareto front index. The blue dots indicate the empirically computed risk values for different settings ($K_i$) along the Pareto front and the solid lines are the best-fit approximate risk curve.}
        \label{fig:risk_fit}
    \end{subfigure}
    \hfill
    \begin{subfigure}[t]{0.32\textwidth}
        \centering
        \includegraphics[width=\textwidth]{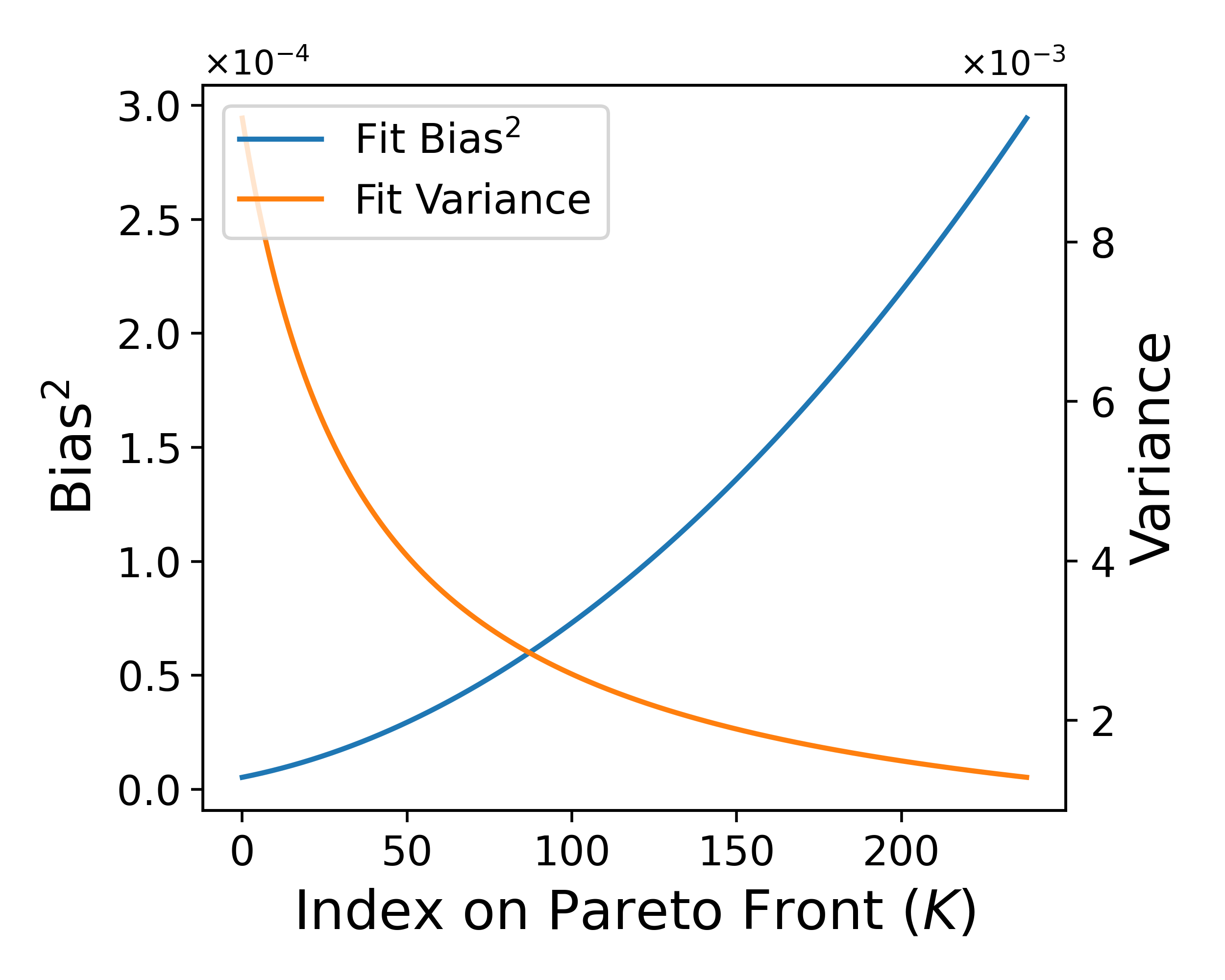}
        \caption{Squared-bias and variance components of the fitted approximate risk function. Bias increases as we move along the Pareto front since lower repetitions means more systematic bias in the GP regression. Variance decreases since more design points means a more rigid mean surface.}
        \label{fig:bias_variance}
    \end{subfigure}
    \caption{Results for determining the optimal $K = (D, R)$ setting along the Pareto front, where the optimal $K$ is the setting minimizing the expected square loss of the optimized $\hat{\alpha}_K$ with respect to $\alpha_{orcl}$.}
    \label{fig:design_rep_images}
\end{figure}

Overall, this analysis supports a rule of thumb to allocate computational resources for choosing $\alpha$ more to design points than repetitions, while keeping in mind that the number of repetitions should not be too small so as to avoid substantial systematic bias in the Gaussian Process regression.

\begin{algorithm}[H]
    \caption{Constructing the confidence interval surrogate models}
    \label{alg:opt_alpha}
    \begin{algorithmic}[1] 
    \REQUIRE Number of design points $D \in \mathbb{N}$ and number of repetitions per design point $R \in \mathbb{R}$.
    \ENSURE $\alpha^*$ minimizing the estimated worst-case expected discrepancy difference.

    \STATE Use Latin Hypercube sampling to generate $D$ values in $(\alpha, \rho) \in (\alpha_l, \alpha_u) \times (\rho_l, \rho_u)$.
    \FOR{$i = 1$ to $D$}
        \FOR{$j = 1$ to $R$}
            \STATE Sample $\hat{\rho}_{ij}$ under its sampling distribution when $\rho_i$ is the true correlation.
            \STATE Use DDMM to compute the adjusted sample correlation: $\hat{\rho}^{ij}_a = g(\hat{\btheta}_{ij}; \hat{\rho}_{ij})$, where $\hat{\btheta}_{ij}$ is the DDMM adjustment parameter for the $i$th design point and $j$th repetition.
            \STATE Compute the discrepancy difference value: $\Delta_{ij} := \discrepancy(\hat{\rho}^{ij}_a, \rho_i) - \discrepancy(\hat{\rho}_{ij}, \rho_i)$.
        \ENDFOR
    \ENDFOR
    \STATE Obtain mean and variance values at each design point using \Cref{eq:mean_var}.
    \STATE Fit a GP to estimate the conditional mean surface \Cref{eq:alpha_rho_surface}, $\hat{f}(\alpha, \rho)$. This surface estimates the expected discrepancy difference for $(\alpha, \rho)$ in the given hyperrectangle.
    \STATE Use $\hat{f}$ to compute the $\alpha$ minimizing the worst-case expected discrepancy difference:
    \begin{equation}
        \alpha^* = \underset{\alpha \in (\alpha_l, \alpha_u)}{\text{argmin}} \underset{\rho \in (\rho_l, \rho_u)}{\max} \; \hat{f}(\alpha, \rho).
    \end{equation}
    \STATE \textbf{return} Optimized $\alpha^*$.
    \end{algorithmic}
\end{algorithm}

\bibliographystyle{apalike}
\bibliography{references}

\end{document}